\newcommand{\arxiv}[1]{{\tt \href{http://arxiv.org/abs/#1}{arXiv:#1}}}
\newcommand{\Set}[2]{\left\{ #1 \, \left| \; #2 \right. \right\}}
\newcommand{\ghost}[1]{\textcolor{white}{#1}}
\newcommand{\old}[1]{}
\newcommand{\moniker}[1]{{\em (#1)}}
\DeclareRobustCommand{\SkipTocEntry}[5]{}
\newcommand{\silentsec}[1]{\addtocontents{toc}{\SkipTocEntry}\section*{#1}}  
\newcommand{\silentsubsec}[1]{\addtocontents{toc}{\SkipTocEntry}\subsection*{#1}}
\newtheorem{theorem}{Theorem}[section]
\newtheorem{lemma}[theorem]{Lemma}
\newtheorem{corollary}[theorem]{Corollary}
\theoremstyle{remark}
\newtheorem*{example}{Example}
\newtheorem*{remark}{Remark}
\numberwithin{counter}{section}
\theoremstyle{definition}
\newtheorem{definition}[theorem]{Definition}
\def\K{K} 
\def\isom{\simeq}
\def\mm{\mathbf{m}}
\def\qq{\mathbf{q}}
\def\rr{\mathbf{r}}
\def\ss{\mathbf{s}}
\def\uu{\mathbf{u}}
\def\vv{\mathbf{v}}
\def\ww{\mathbf{w}}
\def\xx{\mathbf{x}}
\def\yy{\mathbf{y}}
\def\zz{\mathbf{z}}
\def\mm{\mathbf{m}}
\def\kk{\mathbf{k}}
\def\Proc{\CMcal{P}}  
\def\Net{{\mathcal{N}}}  
\def\Crit{\mathrm{Crit\,}}
\def\acts{\mathop{\triangleright}}
\def\End{\mathrm{End\, }}
\def\zero{\mathbf{0}}
\def\one{\mathbf{1}}
\def\Sa{\mathcal{S}}
\def\basis{1}
\def\Rotor{{\tt Rotor}}
\def\Sand{{\tt Sand}}
\def\Topp{{\tt Topp}}
\def\N{\mathbb{N}}
\def\Z{\mathbb{Z}}
\def\Q{\mathbb{Q}}
\def\R{\mathbb{R}}
\def\eps{\epsilon}
\begin{document}

\title[Abelian Networks]{Abelian Networks II. Halting On All Inputs}

\author[Benjamin Bond and Lionel Levine]{Benjamin Bond and Lionel Levine}

\address{Benjamin Bond, Department of Mathematics, Stanford University, Stanford, California 94305. {\tt\url{http://stanford.edu/~benbond}}}

\address{Lionel Levine, Department of Mathematics, Cornell University, Ithaca, NY 14853. {\tt \url{http://www.math.cornell.edu/~levine}}}

\date{August 4, 2015}
\keywords{abelian distributed processors, asynchronous computation, automata network, chip-firing, commutative monoid action, Dickson's lemma, least action principle, M-matrix, sandpile, torsor}
\subjclass[2010]{
68Q10, 
37B15, 
20M14, 
20M35, 
05C50, 
}

\begin{abstract}
Abelian networks are systems of communicating automata satisfying a local commutativity condition. We show that a finite irreducible abelian network halts on all inputs if and only if all eigenvalues of its production matrix lie in the open unit disk. 
\end{abstract}

\maketitle
%
\section{Introduction}

Automata networks in general are nondeterministic: the same input can produce many different outputs depending on the order of events at different nodes of the network.
However, there is a sizable class of automata networks, proposed by Dhar in \cite{Dha99} and termed \emph{abelian networks} in \cite{part1}, for which the output is uniquely determined by the input.  

If we take the view that an abelian network is a kind of asynchronous computer, then one of the most fundamental questions is its \emph{halting problem}.  A countably infinite abelian network can emulate a Turing machine with infinite tape \cite{Hannah}, so the question of whether a given abelian network halts on a given input is undecidable.  However, for a \emph{finite} abelian network it is decidable (below we give an argument using Dickson's lemma) and its computational complexity is an interesting question.  

The first halting criterion in an abelian network that we know of is due to Tardos \cite{Tar88}, who found an efficient check for finite termination of chip-firing on an undirected graph. Chip-firing belongs to a subclass of abelian networks called \emph{toppling networks}, which can be described informally as follows.  Let $L$ be an integer square matrix with positive diagonal entries nonpositive off-diagonal entries. Each vertex $v$ has a number of \emph{chips} $c_v$ and is allowed to \emph{topple} if $c_v \geq L_{vv}$; the result of toppling $v$ is that $v$ loses $L_{vv}$ chips and each other vertex $u \neq v$ gains $-L_{uv}$ chips. Halting means that there exists a finite sequence of topplings after which $c_v < L_{vv}$ for all $v$.

The Tardos halting criterion applies when $L$ is the Laplacian of an undirected (or Eulerian directed) graph. The Laplacian of a general directed graph is more difficult because there exist directed graphs with chip configurations $c$ that require exponentially many topplings to halt.  Bj\"{o}rner, Lov\'{a}sz and Shor \cite{BLS91} remarked, ``\emph{one can ask for a characterization of those digraphs and initial chip configurations that guarantee finite termination}.'' This is a difficult problem: Bj\"{o}rner and Lov\'{a}sz \cite{BL92} gave an algorithm that takes exponential time in the worst case. Polynomial time algorithms are available for Eulerian and coEulerian graphs, but the problem is $\NP$-complete for a general directed multigraph \cite{FL15}.

The halting problem for abelian networks is at least as hard as the problem posed by Bj\"{o}rner, Lov\'{a}sz and Shor.
In this paper we address the softer question of characterizing which finite abelian networks $\Net$ halt on \emph{all} inputs. We associate to $\Net$ a \emph{production matrix} $P$ whose entries are nonnegative rational numbers, and show that $\Net$ halts on all inputs if and only if the Perron-Frobenius eigenvalue of $P$ is strictly less than $1$. This result generalizes an unpublished theorem of Gabrielov \cite{Gab94}, who considered the case of a toppling network. 

\subsection*{Outline}
Section~\ref{s.background} contains mathematical background on commutative monoid actions, Dickson's lemma and toppling matrices. Theorem~\ref{t.monoidtogroup} is of some independent interest: it gives a general mechanism for how torsors (free transitive actions) of abelian groups arise from monoid actions. 

After reviewing the definition of an abelian network in Section~\ref{s.networks}, we define two basic algebraic objects associated to an abelian network, the \emph{total kernel} and \emph{production matrix}, in Section~\ref{s.KP}.
We then examine two ways of decomposing an abelian network: local components and strong components. The former have a smaller state space, the latter a smaller alphabet.

In Section~\ref{s.halting} we prove our criterion for halting on all inputs.

In Section~\ref{s.laplacian} we define the \emph{sandpilization} $\Sa(\Net)$ of an abelian network $\Net$, and show that $\Net$ halts on all inputs if and only if $\Sa(\Net)$ does.

We conclude in Section~\ref{s.concluding} with a short discussion of further research directions.

\section{Mathematical background}
\label{s.background}


\subsection{Commutative monoid actions}
\label{s.monoid}

Any finite commutative monoid $M$ contains an abelian group whose identity element is the minimal idempotent of $M$.  Every monoid action of $M$ induces a corresponding group action.  The main result of this section is Theorem~\ref{t.monoidtogroup} relating these two actions.
We have not seen this theorem stated explicitly in the literature, but some of the lemmas in this section are well-known in the semigroup community.  They trace their origins to the work of Green \cite{Gre51} and Sch\"{u}tzenberger \cite{Sch57}; see \cite{Gri01,Ste10} for modern treatments.  We include their short proofs here in order to highlight the beauty and simplicity of the commutative case.  For refinements of some of the lemmas below, and extensions to a certain class of infinite semigroups ($\pi$-regular semigroups) see \cite{Gri07}.

Let $M$ be a commutative monoid, that is, a set equipped with a commutative and associative operation $(m,m') \mapsto mm'$ with an identity element $\eps \in M$ satisfying $\eps m = m$ for all $m \in M$. Let
	\begin{align*} \mu: M \times X &\to X \\ (m,x) &\mapsto mx \end{align*}
be a monoid action of $M$ on a set $X$; that is, $\eps x=x$ and $m(m'x) = (mm')x$ for all $m,m' \in M$ and all $x \in X$.  
We say that $\mu$ is \emph{irreducible} if there does not exist a partition of $X$ into nonempty subsets $X_1$ and $X_2$ such that $MX_1 \subset X_1$ and $MX_2 \subset X_2$.  Consider the relation $\sim$ on $X$ defined by
	\begin{equation} \label{e.access}  x \sim x' : \qquad \exists m,m' \in M \text{ such that } mx=m'x' \end{equation}

\begin{lemma}
\label{l.chainshortening}
$\sim$ is an equivalence relation, and $\mu$ is irreducible if and only if $\sim$ has just one equivalence class.
\end{lemma}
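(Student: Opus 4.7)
The plan is to verify reflexivity, symmetry, and transitivity of $\sim$ and then handle the two directions of the irreducibility equivalence, observing that commutativity is the crucial ingredient only in the transitivity step and the forward direction.

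For reflexivity, I would just take $m = m' = \epsilon$ so that $\epsilon x = \epsilon x$. Symmetry is immediate from the symmetric form of the defining condition. The main content is transitivity: if $mx = m'y$ and $ny = n'z$, I would act on the first equation by $n$ (applying the monoid action to both sides) to obtain $(nm)x = (nm')y$, then use commutativity to rewrite $nm' = m'n$ and apply $ny = n'z$, concluding that $(nm)x = m'(ny) = (m'n')z$. This exhibits witnesses showing $x \sim z$. The only nontrivial point is that one needs commutativity (or at least that $M$ acts by commuting endomorphisms) to reorder the action of $n$ past $m'$; in the general semigroup setting this is precisely where Green-type equivalences become more delicate, but here it is a one-line check.

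For the irreducibility equivalence, the key preliminary observation is that every $\sim$-equivalence class is closed under the action of $M$: for any $x$ and any $m \in M$ we have $\epsilon \cdot (mx) = m \cdot x$, so $mx \sim x$. Thus the partition of $X$ into $\sim$-classes is an $M$-invariant partition. If $\sim$ has more than one class, choose any single class $X_1$ and let $X_2$ be its complement; both are nonempty and $M$-invariant, so $\mu$ is reducible. Conversely, if $\mu$ is reducible with invariant partition $X = X_1 \sqcup X_2$, then for $x_1 \in X_1$ and $x_2 \in X_2$, any candidate relation $m x_1 = m' x_2$ would place an element of $MX_1 \subset X_1$ equal to an element of $MX_2 \subset X_2$, which is impossible; so $x_1 \not\sim x_2$ and $\sim$ has at least two classes.

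I do not anticipate a real obstacle here; the whole argument is essentially bookkeeping. The one place to be slightly careful is the transitivity step, where one must cleanly record the use of commutativity (the paper's later lemmas build on $\sim$, so the reader will want to see that dependence made explicit).
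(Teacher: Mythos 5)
Your proof is correct and follows essentially the same route as the paper's: reflexivity and symmetry are immediate, transitivity is the one-line commutativity computation (you write $(nm)x = m'(ny) = (m'n')z$, which matches the paper's $m''mx = m'm'''x''$ up to renaming), and the irreducibility equivalence rests on the observation that each $\sim$-class is $M$-invariant. The only cosmetic difference is that you prove the ``one class implies irreducible'' direction via the contrapositive, whereas the paper argues directly that no partition can be invariant on both sides; the content is identical.
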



\begin{proof}
Clearly $x \sim x$, and $x \sim y$ implies $y \sim x$.
To check transitivity, note that 
if $mx=m'x'$ and $m''x'=m'''x''$, then
	\[ m''mx = m''m'x' = m'm''x' = m'm'''x'' \]
where the middle equality uses commutativity of $M$. Hence $x \sim x''$.

If $X_1$ is any equivalence class of $\sim$ then $MX_1 \subset X_1$.
Hence if $\mu$ is irreducible then $\sim$ has just one equivalence class. Conversely, if $\sim$ has just one equivalence class then for any partition of $X$ into nonempty subsets $X_1$ and $X_2$, we have $m_1 x_1 = m_2 x_2$ for some $x_1 \in X_1$, $x_2 \in X_2$ and $m_1,m_2 \in M$. If $MX_1 \subset X_1$ then $m_1 x_1 \in X_1$ and hence $MX_2 \not \subset X_2$.
\end{proof}


Assume now that $M$ is finite. An \emph{idempotent} is an element $f \in M$ such that $ff=f$. Among the powers of any element $m\in M$ is an idempotent: by the pigeonhole principle, $m^j = m^k$ for some $j <k$, and then $m^\ell = m^{2\ell}$ where $\ell$ is any integer multiple of $k-j$ such that $\ell \geq j$.  Now consider the product of all idempotents in $M$:
	\[ e := \prod_{f \mbox{ \scriptsize idempotent} } f. \]
Note that $e$ is again an idempotent since $M$ is commutative. This \emph{minimal idempotent} $e$ is accessible from all of $M$: that is, $e \in mM$ for all $m \in M$.  

\begin{lemma}
$eM$ is an abelian group with identity element $e$.
\end{lemma}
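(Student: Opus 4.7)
The plan is to verify the group axioms for $eM$ one at a time, with the only nontrivial ingredient being the existence of inverses.

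First I would check that $eM$ is closed under the monoid operation and that $e$ acts as a two-sided identity on it. Closure is immediate: $(em)(em') = e^2 m m' = e(mm') \in eM$, using commutativity and $e^2 = e$. The identity property is similar: $e \cdot (em) = e^2 m = em$. Commutativity and associativity are inherited from $M$, so $eM$ is at least a commutative monoid with identity $e$.

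The content is in constructing inverses, and the key auxiliary fact I would establish is that $ef = e$ for every idempotent $f \in M$. Since $e = \prod_{g \text{ idempotent}} g$ and $M$ is commutative, I can rewrite $ef$ as the same product with an extra factor of $f$; because $f$ already appears among the $g$'s and $f \cdot f = f$, this extra factor is absorbed, giving $ef = e$. Now for any $m \in M$, the pigeonhole argument already given in the excerpt produces an exponent $k \geq 1$ such that $f := m^k$ is idempotent. Then
\[
(em)(em^{k-1}) = e^2 m^k = e f = e,
\]
so $em^{k-1} \in eM$ is an inverse of $em$. Since every element of $eM$ has the form $em$ for some $m \in M$, this shows every element is invertible.

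The main obstacle I anticipate is essentially the single bookkeeping step $ef = e$; once that is in place the rest is routine. I would present the argument in the order above: closure and identity first, then the absorption lemma for idempotents, then invertibility via an idempotent power.
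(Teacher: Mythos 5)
Your proof is correct and follows essentially the same route as the paper: the identity axiom comes from $ee=e$, and inverses come from the fact that $e$ is accessible from (absorbed by) every element of $M$. The only difference is presentational: the paper cites its previously stated accessibility property $e \in mM$ and writes $e \in (eme)M = (em)(eM)$, whereas you re-derive that property explicitly via the absorption identity $ef=e$ for idempotents $f$ together with an idempotent power $m^k$, yielding the concrete inverse $em^{k-1}$ --- a harmless, slightly more self-contained unpacking of the same idea.
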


\begin{proof} 
Since $e$ is an idempotent we have $e(em) = (ee)m=em$ for all $m \in M$, which verifies the identity axiom.
Since $e$ is accesible from all of $M$, we have $e \in (eme)M = (em)(eM)$ which verifies existence of inverses.
\old{
Consider the action of $M$ on itself by multiplication. This action is irreducible because $e$ is accessible from all of $M$.  For $m \in eM$ we have $m=em$ by part (4) of the above lemma.  Now by part (2), $e \in mM = (me)M = m(eM)$, which verifies the existence of inverses.
}
\end{proof}

In particular, $e \neq \eps$ unless $M$ itself is a group.

\old{ 
An \emph{ideal} of $M$ is a subset $I \subset M$ such that $M I \subset I$.  The intersection of ideals is again an ideal.  The minimal ideal
	\[ J = \bigcap_{\mbox{\scriptsize ideals } I \subset M} I \]
is a nonempty abelian group.  Denote its identity element by $e$.  Then $J=eM$.  In particular, $e\neq 0$ unless $J=M$.  One way to construct $e$ explicitly is 
	\[ e = \prod_{x \mbox{ \scriptsize idempotent} } x \]
where the product is over all idempotent elements of $x \in M$ (elements such that $xx=x$).  Since $M$ is commutative, the product of idempotents is again an idempotent.  A characterizing property of $e$ is that it is the unique idempotent accessible from all of $M$: that is, $ee=e$ and $e \in mM$ for all $m \in M$.
}
	
\begin{lemma}
\moniker{Recurrent Elements Of A Monoid Action}
\label{l.recurrent}
Let $M$ be a finite commutative monoid and $\mu : M \times X \to X$ an irreducible action.  The following are equivalent for $x \in X$:
\begin{enumerate}
\item $x \in My$ for all $y \in X$
\item $x \in M(mx)$ for all $m \in M$
\item $x \in mX$ for all $m \in M$
\item $x \in eX$
\item $x = ex$
\end{enumerate}
\end{lemma}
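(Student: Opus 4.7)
The plan is to prove the five conditions equivalent by establishing the cycle $(1) \Rightarrow (2) \Rightarrow (3) \Rightarrow (4) \Rightarrow (5) \Rightarrow (1)$. Four of the five implications are essentially unwindings of the definitions; the only step that uses genuine input from the preceding lemmas is $(5) \Rightarrow (1)$.

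For the easy directions: $(1) \Rightarrow (2)$ is immediate by taking $y = mx$. For $(2) \Rightarrow (3)$, given $m \in M$, condition (2) supplies $m' \in M$ with $x = m'(mx)$, and then commutativity of $M$ rewrites this as $x = m(m'x) \in mX$. For $(3) \Rightarrow (4)$, just specialize $m = e$. For $(4) \Rightarrow (5)$, write $x = ey$ with $y \in X$; then $ex = e(ey) = (ee)y = ey = x$ using idempotency of $e$.

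The main content is $(5) \Rightarrow (1)$. Here I would invoke both the irreducibility hypothesis and the accessibility of $e$. Fix any $y \in X$. By Lemma~\ref{l.chainshortening}, irreducibility of $\mu$ means that $x \sim y$, so there exist $m, m' \in M$ with $mx = m'y$. Because $e$ is accessible from every element of $M$, we can write $e = m''m$ for some $m'' \in M$. Then using (5) and commutativity,
\[ x = ex = m''mx = m''m'y \in My, \]
which is condition (1).

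No step looks delicate; the one potential pitfall is ensuring the accessibility fact $e \in mM$ for every $m \in M$ is used at the right place — it is what converts the symmetric access relation $\sim$ into one-sided membership $x \in My$. Everything else is commutative rewriting.
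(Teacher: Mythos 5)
Your proof is correct and follows essentially the same route as the paper's: the same cyclic chain of implications, the same commutativity rewrite for $(2)\Rightarrow(3)$, and the same use of Lemma~\ref{l.chainshortening} together with accessibility of the minimal idempotent $e$ in $(5)\Rightarrow(1)$.
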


\begin{proof}
(1) $\Rightarrow$ (2): trivial. \\
(2) $\Rightarrow$ (3): Since $M$ is commutative, $M(mx) = (Mm)x = (mM)x \subset Mx$. \\
(3) $\Rightarrow$ (4): trivial. \\
(4) $\Rightarrow$ (5): If $x=ey$, then $ex = e(ey) = (ee)y = ey = x$. \\
(5) $\Rightarrow$ (1): Here we use irreducibility. By Lemma~\ref{l.chainshortening}, given $x,y \in X$ there exist $m,m' \in M$ such that $mx = m'y$.  Let $m''$ be such that $m''m=e$.  If $x=ex$, then $x = m''mx = m''m'y \in My$.
\end{proof}

Any action of a finite commutative monoid
	\[ \mu: M \times X \to X \]
induces by restriction a corresponding group action
	\[ eM \times eX \to eX. \]
To see this, note that for any $m \in M$ and $x \in X$ we have $m(ex) = (me)x = (em)x = e(mx) \in eX$, so the action of $M$ on $X$ restricts to a monoid action of $M$ on $eX$.  Since $e(ex) = (ee)x = ex$, the element $e$ acts by identity on $eX$.  Since $eM$ is a group with identity element $e$, it follows that $eM \times eX \to eX$ is a group action. 

In fact slightly more is true. We say that $m \in M$ \emph{acts invertibly} on a subset $Y \subset X$ if the map $y \mapsto my$ is a bijection $Y \to Y$.

\begin{lemma}
\label{l.actsinvertibly}
Let $M$ be a finite commutative monoid and $\mu : M \times X \to X$ a monoid action.  Then every $m \in M$ acts invertibly on $eX$.
\end{lemma}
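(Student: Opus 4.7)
The plan is to produce an explicit power of $m$ that inverts its action on $eX$, namely $m^{k-1}$ for a suitably chosen $k$. First I note that $m$ really does send $eX$ into $eX$: for $x = ey$ we have $mx = m(ey) = (em)y = e(my) \in eX$. So the only thing to prove is that this self-map is a bijection.

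The key algebraic input is the absorption property $ef = e$ for every idempotent $f \in M$. This follows immediately from the definition $e = \prod_{g \text{ idempotent}} g$: since $f$ appears in this product, $ef$ equals $e$ with one extra factor of $f$, and $f^2 = f$ allows us to cancel it. Next, since $M$ is finite, we can choose $k \geq 1$ such that $m^k$ is idempotent (this was already observed in the paragraph introducing $e$). Then $em^k$ is a product of two commuting idempotents, hence idempotent as well, and applying the absorption property with $f = em^k$ gives
\[ em^k \;=\; e \cdot em^k \;=\; e. \]

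Now for any $x = ey \in eX$,
\[ m^k x \;=\; m^k(ey) \;=\; (em^k)y \;=\; ey \;=\; x, \]
so $m^k$ acts as the identity on $eX$. Consequently $m^{k-1}$ is a two-sided inverse of $m$ on $eX$: $m \cdot m^{k-1} = m^{k-1} \cdot m = m^k$ acts as the identity, which forces $y \mapsto my$ to be both injective and surjective on $eX$.

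The main conceptual point is the absorption identity $ef = e$, and it is what forces every $m \in M$ to eventually stabilize inside $eX$. No irreducibility hypothesis is needed here, in contrast to Lemma~\ref{l.recurrent}; commutativity of $M$ enters twice, once to make $em^k$ idempotent and once in the reshuffling $m^k(ey) = (em^k)y$. I do not foresee a serious obstacle beyond writing these manipulations down carefully.
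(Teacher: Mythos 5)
Your proof is correct, and it takes a somewhat different route from the paper's. The paper argues in two lines: since $(em)(ex) = (eme)x = m(ex)$, the elements $em$ and $m$ have the same action on $eX$, and since the restriction $eM \times eX \to eX$ was already shown to be a group action (with $eM$ an abelian group), $em$ acts invertibly, hence so does $m$. You instead bypass the group $eM$ entirely and exhibit an explicit inverse: choose $k$ with $m^k$ idempotent, derive the absorption identity $em^k = e$ from the definition of $e$ as the product of all idempotents, and conclude that $m^k$ fixes $eX$ pointwise, so $m^{k-1}$ inverts $m$ on $eX$. The underlying mechanism is the same --- your identity $em^k = e$ is exactly the ``$e$ is accessible from all of $M$'' fact the paper uses to prove $eM$ is a group --- but your packaging is self-contained and makes the inverse concrete, at the cost of redoing a little of the work the paper has already banked in the preceding lemmas. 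All steps check out (in particular $em^k$ is idempotent by commutativity, and the absorption argument with $f = em^k$, or more directly with $f = m^k$, is valid); the only point worth stating explicitly is that $m^{k-1}$ also maps $eX$ into $eX$, which follows from your opening observation applied to $m^{k-1}$ in place of $m$, and the degenerate case $k=1$ is harmless since then $m^{k-1}$ is the monoid identity.
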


\begin{proof}
For any $m \in M$ and $x \in X$ we have $(em)(ex) = (eme)x = (mee)x = (me)x = m(ex)$, so $em$ and $m$ have the same action on $eX$.  Since $eM \times eX \to eX$ is a group action, $em$ and hence $m$ acts invertibly on $eX$.
\end{proof}

We say that a monoid action $\mu : M\times X \to X$ is \emph{faithful} if there do not exist distinct elements $m,m' \in M$ such that $mx=m'x$ for all $x \in X$.  
The next theorem shows that relatively weak properties of a monoid action (faithful and irreducible) imply stronger properties of the corresponding group action (free and transitive).

Let $G$ be a group with identity element $e$.  Recall that a group action $G \times Y \to Y$ is called \emph{transitive} if $Gy = Y$ for all $y \in Y$, and is called \emph{free} if for all $g \neq e$ there does not exist $y \in Y$ such that $gy=y$.  If the action is both transitive and free, then for any two elements $y,y' \in Y$ there is a unique $g \in G$ such that $gy=y'$; in particular, $\# G = \# Y$.  

\begin{theorem}
\label{t.monoidtogroup}
\moniker{Group Actions Arising From Monoid Actions}
Let $M$ be a finite commutative monoid and $\mu : M \times X \to X$ an irreducible monoid action. Then the restriction of $\mu$ to $eM \times eX$ is a transitive group action
	 \[ e\mu: eM \times eX \to eX. \]
If $\mu$ is also faithful, then $e\mu$ is free.
\end{theorem}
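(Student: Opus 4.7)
The plan is to handle transitivity and freeness separately, reusing the setup already developed in the lemmas above.

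First, I would note that the preamble to Lemma~\ref{l.actsinvertibly} already shows $e\mu$ is a well-defined group action of $eM$ on $eX$, so the content is transitivity and (under faithfulness) freeness. For transitivity, fix $x, x' \in eX$. By irreducibility and Lemma~\ref{l.chainshortening}, there exist $m, m' \in M$ with $mx = m'x'$. Multiplying both sides by $e$ and using $ex = x$, $ex' = x'$, I get $(em)x = (em')x'$ with $em, em' \in eM$. Since $eM$ is a group, let $h \in eM$ be the inverse of $em'$. Then $g := h(em) \in eM$ satisfies
\[ gx = h(em)x = h(em')x' = ex' = x', \]
giving transitivity.

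For freeness, suppose $\mu$ is faithful and $g \in eM$ fixes some $x \in eX$. The goal is to conclude $g = e$. First I upgrade this pointwise fact into a global fact on $eX$: by transitivity, any $y \in eX$ can be written as $y = hx$ for some $h \in eM$, and since $eM$ is abelian, $gy = ghx = hgx = hx = y$. So $g$ acts as identity on $eX$. Now I extend to all of $X$ to invoke faithfulness: for any $z \in X$ we have $ez \in eX$, hence $g(ez) = ez$. Using $ge = g$ (which holds since $g \in eM$ and $e$ is the identity of $eM$) this rewrites as $gz = ez$ for every $z \in X$. Faithfulness of $\mu$ then forces $g = e$, as required.

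The only point requiring care is the identification $ge = g$ for $g \in eM$, which is immediate from $eM$ being a monoid with identity $e$; beyond that the argument is a routine application of the structure already built in Lemmas~\ref{l.chainshortening}, \ref{l.recurrent}, and~\ref{l.actsinvertibly}. I do not anticipate a substantive obstacle.
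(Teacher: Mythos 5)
Your proposal is correct and follows essentially the same route as the paper: the freeness step is the paper's argument almost verbatim (a fixed point in $eX$ is upgraded, via commutativity and $ge=g$, to $gz=ez$ for all $z\in X$, and faithfulness then forces $g=e$). For transitivity you inline the computation behind Lemma~\ref{l.recurrent} ((5)$\Rightarrow$(1)) — multiply the relation supplied by Lemma~\ref{l.chainshortening} by $e$ and invert in the group $eM$ — whereas the paper instead cites that lemma to identify $\bigcap_{y\in X} My$ with $eX$; this is the same argument in slightly different packaging.
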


\begin{proof}
To show transitivity, let 
	$ R = \bigcap_{y \in X} (My). $
Then for any $x \in eX$ we have 
	\begin{equation*} \label{e.orbitsqueeze} R \subset Mx = M(ex) = (Me)x = (eM)x = e(Mx) \subset eX. \end{equation*}  
But $R=eX$ by the equivalence of (1) and (4) in Lemma~\ref{l.recurrent}, so the above inclusions are equalities.  In particular, $(eM)x = eX$ for all $x \in eX$, which shows that $eM$ acts transitively on $eX$.

To show freeness, suppose that $g(ex) = ex$ for some $g \in eM$ and some $x \in X$.  
Now fix $y\in X$. By transitivity, $ey=h(ex)$ for some $h \in eM$, hence
	\[ gy = gey = ghex = hgex = hex = ey. \]
If $\mu$ is faithful, it follows that $g=e$.  Hence $eM$ acts freely on $eX$.
\end{proof}

\subsection{Dickson's Lemma}

The following lemma can be proved by induction on $k$ using the infinite pigeonhole principle.
We remark that it is also a case of the Hilbert basis theorem applied to the monomial ideal $(\mathbf{t}^{\xx_1},\mathbf{t}^{\xx_2},\ldots)$ in the polynomial ring $\Q[\mathbf{t}] = \Q[t_1,\ldots,t_k]$ (In fact the proof of the basis theorem now found in many textbooks, due to Gordan in 1900, uses this special case as a stepping stone; see \cite[Ex.\ 15.15]{Eis95} and \cite{DdJ98} for some history.)

\begin{lemma}
\moniker{Dickson's Lemma, \cite{Dic13}}
\label{l.dickson}
Let $k \geq 1$ be an integer.  For any sequence $\xx_1,\xx_2,\ldots \in \N^k$, there exist indices $m<n$ such that $\xx_m\leq \xx_n$ in the coordinatewise partial ordering.
\end{lemma}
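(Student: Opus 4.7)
The plan is to prove Dickson's lemma by induction on $k$, with the base case $k=1$ handled by a direct well-ordering argument and the inductive step leveraging the stronger statement that every infinite sequence in $\N^k$ has an infinite coordinatewise-non-decreasing subsequence. Although the lemma only asserts the existence of a single pair $m < n$ with $\xx_m \leq \xx_n$, proving the stronger ``infinite non-decreasing subsequence'' version is what makes the induction actually go through, since one pair is not enough to feed into the inductive hypothesis.

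For the base case $k=1$, given a sequence $x_1,x_2,\ldots \in \N$, I would call an index $i$ a \emph{peak} if $x_j < x_i$ for every $j > i$. There can be only finitely many peaks, since the values at distinct peaks form a strictly decreasing sequence in $\N$ and $\N$ is well-ordered. Beyond the last peak, every index $i$ admits some $j > i$ with $x_j \geq x_i$, and iterating this produces an infinite non-decreasing subsequence; in particular, we obtain $m < n$ with $x_m \leq x_n$.

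For the inductive step, suppose the lemma (in its strengthened ``infinite non-decreasing subsequence'' form) holds for $k-1$, and let $\xx_1,\xx_2,\ldots \in \N^k$. Write $\xx_i = (y_i, \zz_i)$ with $y_i \in \N$ and $\zz_i \in \N^{k-1}$. Applying the base-case argument to $(y_i)$, extract an infinite subsequence of indices $i_1 < i_2 < \cdots$ on which the first coordinate is non-decreasing: $y_{i_1} \leq y_{i_2} \leq \cdots$. Now apply the inductive hypothesis to $\zz_{i_1}, \zz_{i_2}, \ldots \in \N^{k-1}$ to extract a further subsequence on which the remaining $k-1$ coordinates are coordinatewise non-decreasing. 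Combining, we obtain an infinite subsequence of the original along which all $k$ coordinates are non-decreasing, which in particular gives the desired pair $m < n$ with $\xx_m \leq \xx_n$.

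There is no real obstacle here; the only subtle point is recognizing that one must carry through the strengthened form (infinite non-decreasing subsequence) rather than just the stated conclusion, because the single pair guaranteed by the lemma statement is insufficient input for the inductive step. The finitely-many-peaks trick handles $k=1$ cleanly, and the rest is a routine successive-refinement of subsequences.
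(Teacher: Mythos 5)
Your proof is correct, but it takes a different route from the paper's. You prove the stronger statement that every sequence in $\N^k$ admits an infinite coordinatewise non-decreasing subsequence, by induction on $k$: the ``finitely many peaks'' monotone-subsequence argument at $k=1$, followed by successive refinement of subsequences coordinate by coordinate. The paper instead inducts directly on the weak statement (a single pair $m<n$ with $\xx_m \leq \xx_n$): it writes $\N^k$ as the union of $\xx_1 + \N^k$ with the finitely many hyperplanes $H_{i,a} = \{\yy \in \N^k : \yy_i = a\}$, $a < (\xx_1)_i$; either some later term dominates $\xx_1$, or by the infinite pigeonhole principle infinitely many terms lie in a single $H_{i,a} \isom \N^{k-1}$, to which the inductive hypothesis applies. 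In particular, your remark that the strengthened form is \emph{necessary} to make the induction go through is not quite right---it is necessary for your particular induction scheme, but the hyperplane decomposition shows the weak statement is self-sufficient. What your approach buys is the stronger conclusion itself (that $(\N^k,\leq)$ contains no infinite antichain-like bad sequence, i.e.\ every sequence has an infinite non-decreasing subsequence), which is often useful beyond the single-pair statement; what the paper's approach buys is a slightly leaner induction that proves exactly what is stated.
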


See \cite{FFSS11} for an effective version, giving bounds on $n$ in terms of $k$ and the largest jump in the sequence $\xx_i$. Another bound is obtained in \cite{BS15} from a formalization of an existence proof.

\subsection{Toppling matrices}
\label{s.nonneg}

We will use the following form of the Perron-Frobenius theorem (see \cite[\textsection 8]{HJ90} for part (i) and \cite{Ash87} for part (ii)).  

\begin{lemma} \label{l.perronfrobenius}
\moniker{Perron-Frobenius}
Let $P$ be a square matrix with nonnegative real entries.  
\begin{itemize}
\item[(i)] $P$ has a nonnegative real eigenvector $\xx$ with nonnegative real eigenvalue $\lambda$, such that the absolute values of all other eigenvalues of $P$ are $\leq \lambda$.  
\item[(ii)] If $P$ has rational entries and $\lambda$ is rational, then $\xx$ can be taken to have integer entries.  
\end{itemize}
\end{lemma}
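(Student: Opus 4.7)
For part (i), my plan is a Brouwer-plus-perturbation argument for existence of the eigenvector, followed by a spectral-radius identification for the dominance. If $P$ is entrywise strictly positive, then the map $f(\xx) := P\xx / \|P\xx\|_1$ is continuous on the compact convex simplex $\Delta = \{\xx \in \R_{\geq 0}^n : \sum_i x_i = 1\}$, and Brouwer's fixed-point theorem produces a fixed point $\xx$ satisfying $P\xx = \lambda \xx$ with $\lambda := \|P\xx\|_1 > 0$. For general nonnegative $P$, I would apply this to the strictly positive perturbations $P_\epsilon := P + \epsilon J$ (with $J$ the all-ones matrix), obtain $(\xx_\epsilon, \lambda_\epsilon) \in \Delta \times [0, \|P_\epsilon\|_\infty]$, and extract a convergent subsequence as $\epsilon \downarrow 0$ using compactness of $\Delta$ and the uniform bound on $\lambda_\epsilon$ given by the maximum row sum; the limit is the desired nonnegative eigenvector $\xx$ with nonnegative eigenvalue $\lambda$.

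For the dominance of $\lambda$, I would argue as follows. For any (possibly complex) eigenvalue $\mu$ with eigenvector $\yy$, the coordinatewise bound $|\mu|\,|\yy| = |P\yy| \leq P|\yy|$ holds because the entries of $P$ are nonnegative; iterating gives $|\mu|^k |\yy| \leq P^k |\yy|$, and Gelfand's spectral radius formula $\rho(P) = \lim_k \|P^k\|^{1/k}$ then yields $|\mu| \leq \rho(P)$. The identity $\lambda = \rho(P)$ holds by the classical Perron theorem in the strictly positive case, and extends to general nonnegative $P$ by continuity of $\rho(P_\epsilon)$ and $\lambda_\epsilon$ under the perturbation $P_\epsilon$. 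Combining these two ingredients gives $|\mu| \leq \lambda$, as required.

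For part (ii), suppose $P$ has rational entries and $\lambda \in \Q$. Then $\ker(P - \lambda I)$ is the solution set of a linear system with rational coefficients, hence a $\Q$-rational subspace of $\R^n$. The set
\[ C := \ker(P - \lambda I) \cap \R_{\geq 0}^n \]
is therefore a rational polyhedral cone, defined by the rational equations cutting out $\ker(P - \lambda I)$ together with the rational inequalities $x_i \geq 0$. Part (i) supplies a nonzero element of $C$, so by the Minkowski--Weyl theorem applied over $\Q$, $C$ contains a nonzero rational vector $\yy$. Clearing denominators yields the desired nonnegative integer eigenvector.

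The main obstacle is the spectral dominance in (i), which is the substantive content of Perron-Frobenius: producing \emph{some} nonnegative eigenvector via Brouwer is straightforward, but pinning down its eigenvalue as the spectral radius requires the coordinatewise iteration $|\mu|^k |\yy| \leq P^k |\yy|$ together with Gelfand's formula and a continuity argument passing from the strictly positive case to the general nonnegative case. Part (ii) by contrast is a soft consequence of the fact that nonzero rational polyhedral cones are generated by rational vectors.
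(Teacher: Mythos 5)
Your proposal is correct, but there is nothing in the paper to compare it against line by line: the paper does not prove this lemma, it simply quotes part (i) from \cite[\textsection 8]{HJ90} and part (ii) from \cite{Ash87}. Measured against that, your argument is a reasonable self-contained substitute. The Brouwer fixed-point construction on the simplex for strictly positive matrices, followed by the perturbation $P_\eps = P+\eps J$ and a compactness limit, does produce a nonnegative eigenpair $(\xx,\lambda)$; and your treatment of (ii) is sound and, pleasingly, avoids the trap of just taking a rational basis of $\ker(P-\lambda I)$, which need not yield a \emph{nonnegative} vector when $\xx$ sits on the boundary of the orthant --- the Minkowski--Weyl/Fourier--Motzkin fact that a cone cut out by rational equations and inequalities has rational generators is exactly the right tool, and clearing denominators finishes. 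Two remarks on (i). First, once you have $\lambda=\rho(P)$, the bound $|\mu|\le\lambda$ for every other eigenvalue is immediate from the definition of the spectral radius, so the coordinatewise iteration $|\mu|^k|\yy|\le P^k|\yy|$ and Gelfand's formula are redundant. Second, the genuinely load-bearing step is the identification $\lambda_\eps=\rho(P_\eps)$, which you get by citing the uniqueness part of the classical Perron theorem for positive matrices; that is legitimate (the paper itself defers to textbooks), but if you want full self-containment, apply your Brouwer construction also to $P_\eps^T$ to get a strictly positive left eigenvector $\zz_\eps$ with eigenvalue $\lambda'_\eps$, and pair it against $|\mu|\,|\yy|\le P_\eps|\yy|$: since $\zz_\eps^T|\yy|>0$ this gives $|\mu|\le\lambda'_\eps$ for every eigenvalue $\mu$ of $P_\eps$, and the symmetric pairing gives $\lambda'_\eps=\lambda_\eps=\rho(P_\eps)$; the same pairing applied to eigenvalues of $P$ (using $P\le P_\eps$ entrywise) even yields $|\mu|\le\lambda_\eps$ directly, so letting $\eps\downarrow 0$ finishes without invoking continuity of the spectral radius.
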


Following \cite{PS04}, we call $L$ a \emph{toppling matrix} if it satisfies the equivalent conditions of the following lemma.  See \cite[Theorem~4.3]{FK62} for a proof of the equivalence. Such matrices are also known as ``$M$-matrices.''  

For a vector $\xx$, we write $\xx > \zero$ to mean that all coordinates of $\xx$ are nonnegative and $\xx \neq \zero$.

\begin{lemma}\label{l.posdef} 
\moniker{Toppling Matrices}
Let $L$ be a square matrix with real entries such that 
$L_{ij}\le 0$ for all $i\ne j$. The following are equivalent.
\begin{enumerate}
\item All principal minors of $L$ are positive.

\item All eigenvalues of $L$ have positive real part.

\item There exists a vector $\xx>\zero$ such that $L\xx>\zero$.

\item There exists a vector $\yy>\zero$ such that $L^T\yy>\zero$.

\item $L$ is invertible, and all entries of $L^{-1}$ are nonnegative.
\end{enumerate}
\end{lemma}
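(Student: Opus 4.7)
My plan is to reduce everything to the scalar inequality $s > \rho(A)$ via the shift $L = sI - A$, where $s \ge \max_i L_{ii}$ is chosen so that $A := sI - L$ has only nonnegative entries. Writing $\rho = \rho(A)$ for the Perron eigenvalue from Lemma \ref{l.perronfrobenius}, I will show that each of conditions (2), (3), (4), (5) is equivalent to $s > \rho$, and then bridge to (1) through (3).

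For (2): the spectrum of $L$ is $\{s - \mu : \mu \in \sigma(A)\}$, and since $\rho \in \sigma(A)$ attains the maximal real part (using $\mathrm{Re}(\mu) \le |\mu| \le \rho$), (2) is equivalent to $s > \rho$. For $(s > \rho) \Rightarrow (5)$: the Neumann series $L^{-1} = s^{-1} \sum_{k \ge 0} (A/s)^k$ converges to a nonnegative matrix. For $(5) \Rightarrow (s > \rho)$: setting $\xx := L^{-1}\one$ gives $\xx > \zero$ and $A\xx = s\xx - \one$; pairing with a nonnegative nonzero left Perron eigenvector $\yy$ of $A$ yields
\[ (s - \rho)\yy^T\xx = \yy^T \one > 0, \]
which forces both $\yy^T\xx > 0$ and $s > \rho$ simultaneously. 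Then $(5) \Rightarrow (3)$ via the same $\xx$, and $(3) \Rightarrow (s > \rho)$ by the parallel pairing $\yy^T L\xx = (s - \rho) \yy^T \xx$, with strict positivity secured by a mild perturbation of $\xx$ if any coordinate is zero. Condition (4) is symmetric, since $L^T$ is again a Z-matrix with the same spectrum.

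The equivalence with (1) is the real obstacle, since principal minors do not transform cleanly under $L = sI - A$. I would bridge through (3). For $(3) \Rightarrow (1)$: given $\xx > \zero$ with $L\xx > \zero$, first upgrade $\xx$ to strictly positive coordinates via $\xx \mapsto \xx + \delta L^{-1}\one$ using the already-established $(3) \Rightarrow (5)$ step. For any index set $I$, the identity
\[ (L_I \xx_I)_i \;=\; (L\xx)_i - \sum_{j \notin I} L_{ij}\xx_j \qquad (i \in I) \]
combined with $L_{ij} \le 0$ for $j \ne i$ gives $L_I \xx_I > \zero$. Hence each principal submatrix $L_I$ is a Z-matrix satisfying (3), hence (2), and so $\det L_I > 0$ because its real eigenvalues are positive while complex eigenvalues come in conjugate pairs $\mu, \bar\mu$ contributing $|\mu|^2 > 0$. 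For $(1) \Rightarrow (2)$ I would use the homotopy $L_t := L + tI$: for $t$ large $L_t$ clearly satisfies (2), and positivity of every principal minor of $L_t$ is preserved for all $t \ge 0$, since each such minor is a polynomial in $t$ with positive leading coefficient whose lower-order terms are themselves principal minors of $L$ scaled by nonnegative monomials in $t$. No eigenvalue of $L_t$ can then cross the imaginary axis as $t$ decreases to $0$: a real eigenvalue reaching zero would force $\det L_t = 0$, and a purely imaginary conjugate pair crossing would force the corresponding $2 \times 2$ principal minor to vanish.

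The main obstacle is the direction $(1) \Rightarrow (2)$: the delicate point is ruling out purely imaginary conjugate pairs of eigenvalues, which is exactly where the assumption of positivity of \emph{all} principal minors (not just $\det L$ itself) is used together with the Z-matrix sign pattern. Once this P-matrix-versus-M-matrix step is settled, everything else is routine Perron-Frobenius and Neumann-series manipulation, which is why the equivalences cited to \cite{FK62} compress so cleanly.
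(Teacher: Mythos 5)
The paper does not prove this lemma internally---it cites Fiedler--Pt\'{a}k \cite[Theorem~4.3]{FK62}---so you are not being compared to an in-paper argument. Your overall architecture (shift $L = sI - A$ with $A \geq 0$, reduce everything to $s > \rho(A)$, use Perron--Frobenius and the Neumann series) is the standard route and mostly sound, but there are two genuine gaps.

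First, the step $(1) \Rightarrow (2)$ is where your proof breaks down. The claim that ``a purely imaginary conjugate pair crossing would force the corresponding $2 \times 2$ principal minor to vanish'' is false: eigenvalues are not localized to any particular $2\times 2$ principal submatrix, and indeed for $n \geq 3$ there exist P-matrices (all principal minors positive) with eigenvalues of negative real part. This is exactly where the Z-matrix sign pattern must enter, and your homotopy never uses it. The correct fix, still inside your framework, is to invoke Perron--Frobenius again on the homotopy: writing $L_t = s I - A_t$ with $A_t = A - tI \geq 0$ for $0 \leq t \leq s - \max_i L_{ii}$ --- or more simply, noting that among the eigenvalues of a Z-matrix $L_t$ the one of smallest real part is always the \emph{real} number $s - \rho(sI - L_t)$ (since $\mathrm{Re}(s-\mu) \geq s - |\mu| \geq s - \rho$). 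Thus any eigenvalue that crossed the imaginary axis as $t \downarrow 0$ would be preceded by a real eigenvalue crossing through $0$, forcing $\det L_t = 0$, which your minor argument forbids. Alternatively one can argue directly that $\det(L + tI) > 0$ for all $t \geq 0$ (your expansion of the minor polynomial already shows this), so the characteristic polynomial of $L$ has no roots in $(-\infty, 0]$, hence the real leftmost eigenvalue $s - \rho$ is positive.

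Second, a more subtle issue about the paper's convention: the paper defines $\xx > \zero$ to mean merely ``nonnegative and nonzero,'' and under this literal reading condition (3) is \emph{strictly weaker} than (1), (2), (5) --- take $L = \mathrm{diag}(1,0)$ and $\xx = (1,0)^T$. Your pairing $\yy^T L \xx = (s - \rho)\,\yy^T \xx$ genuinely fails here because both sides can vanish ($\yy$ and $L\xx$ may have disjoint support), and the ``mild perturbation of $\xx$'' you wave at does not obviously preserve $L\xx \geq \zero$ since $L\one$ can have negative entries. The equivalence in \cite{FK62} is for the strict version of (3) (either $\xx$ or $L\xx$, conventionally both, taken to have all coordinates strictly positive), and your argument should be run under that reading; then your route via $(3) \Rightarrow (5) \Rightarrow$ strictly positive $\xx := L^{-1}\one$ with $L\xx = \one \gg \zero$, followed by the cut-down $(L_I \xx_I)_i = 1 - \sum_{j \notin I} L_{ij}\xx_j \geq 1$ and the already-established $(3)_{\mathrm{strict}} \Rightarrow (2)$ on each $L_I$, closes the loop cleanly.
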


\section{Abelian networks; Halting Dichotomy}
\label{s.networks}

We now recall the definition of an abelian network, refering the reader to \cite{part1} for details.
In an abelian network on a directed graph $G=(V,E)$, each vertex $v \in V$ has a \emph{processor} $\Proc_v$ which is an automaton with input alphabet $A_v$ and state space $Q_v$. For each letter $a \in A_v$ there is a state transition map $t_a : Q_v \to Q_v$, and these maps are required to commute: $t_a t_b = t_b t_a$ for all $a,b \in A_v$.  For each edge $(v,u) \in E$ there is a message-passing function $A_v \times Q_v \to A_u^*$ specifying the word (possibly empty) sent to processor $\Proc_u$ in the event that $\Proc_v$ in state $q \in Q_v$ processes letter $a \in A_v$. Each message-passing function is required to satisfy a commutativity condition: namely, if two input words to $\Proc_v$ are permutations of one another, then for each outgoing edge $(v,u)$ the resulting messages passed to $\Proc_u$ must be permutations of one another.

To give a concrete example, let $L$ be an integer $V \times V$ matrix with positive diagonal entries nonpositive off-diagonal entries. The (locally recurrent) \emph{toppling network} $\Topp(L)$ has $A_v = \{v\}$ and $Q_v = \Z / L_{vv} \Z$. The state transition is $t_v(q) = q+1$ (mod $L_{vv}$). Whenever processor $\Proc_v$ transitions from state $L_{vv}-1$ to state $0$ it  sends $-L_{uv}$ letters $u$ to each processor $\Proc_u$ for $u \in V$. 
The \emph{sandpile network} $\Sand(G)$ of a directed graph $G$ without self-loops is the special case where $L_{vv}$ is the outdegree of vertex $v$ and $-L_{uv}$ is the number of edges from $v$ to $u$.  
The commutativity conditions in the previous paragraph hold vacuously for a toppling network because it is \emph{unary}: each alphabet $A_v$ has just one letter. 
See \cite{part1} for many other examples of abelian networks, including non-unary examples.

The total state of an abelian network $\Net = (\Proc_v)_{v \in V}$ is described by an element $\qq \in Q := \prod_{v \in V} Q_v$ giving the internal states of the processors, together with a vector $\xx \in \Z^A$ where $A = \sqcup_{v \in V} A_v$ indicating how many letters of each type are waiting to be processed.  We use the notation $\xx.\qq$ for this pair. Note that $A$ is a disjoint union, so $\xx$ also specifies the locations of the letters (the $\xx_a$ letters $a$ are located at the unique vertex $v$ such that $a \in A_v$).

An \emph{execution} is a word $w = a_1\cdots a_r \in A^*$. It prescribes an order in which letters are to be processed. We write $\pi_w(\xx.\qq)$ for the result of executing $w$ starting from $\xx.\qq$. 
By definition $\pi_w$ is the composition $\pi_{a_1} \circ \cdots \circ \pi_{a_r}$. Each $\pi_a$ has three effects: change the internal state $\qq_v$ to $t_a \qq_v$,  where $v$ is the unique vertex such that $a \in A_v$; decrement $\xx_a$ by one; and increment each coordinate $\xx_b$ by the number of letters $b$ passed (as specified by the message passing function $T_{(v,u)}$ where $b \in A_u$). Note that decrementing $\xx_a$ may cause it to become negative, which is the reason for taking $\xx$ in $\Z^A$ rather than $\N^A$.
Writing $\pi_{a_1\cdots a_i}(\xx.\qq) = \xx^i.\qq^i$, we say that $w$ is \emph{legal} for $\xx.\qq$ if $\xx^{i-1}_{a_i} \geq 1$ for all $i=1,\ldots,r$. We say that $w$ is \emph{complete} for $\xx.\qq$ if $\xx^r \leq \zero$. The least action principle \cite[Lemma 4.3]{part1} says that if $w$ is any legal execution for $\xx.\qq$ and $w'$ is any complete execution for $\xx.\qq$, then $|w|_a \leq |w'|_a$ for all $a\in A$, where $|w|_a$ is the number of letters $a$ in the word $w$.
 A consequence of particular importance in this paper is the following.
 
\begin{lemma} \label{l.dichotomy}
\moniker{Halting Dichotomy, \cite[Lemma 4.4]{part1}}
For a given initial state $\qq$ and input $\xx$ to an abelian network $\Net$, either
\begin{enumerate}
\item There does not exist a finite complete execution for $\xx.\qq$; or
\item Every legal execution for $\xx.\qq$ is finite, and any two complete legal executions $w,w'$ for $\xx.\qq$ satisfy $|w|=|w'|$. 
\end{enumerate}
\end{lemma}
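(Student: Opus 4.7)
The plan is to deduce both parts of alternative (2) as direct consequences of the least action principle stated immediately before the lemma, under the assumption that alternative (1) fails. No additional combinatorial machinery is required; the whole proof is bookkeeping with the letter counts $|w|_a$.

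Suppose alternative (1) fails, so there exists some finite complete execution $w^*$ for $\xx.\qq$, of length $N := |w^*|$. For any legal execution $w$ for $\xx.\qq$, the least action principle applied with $w$ legal and $w^*$ complete gives $|w|_a \leq |w^*|_a$ for every $a \in A$; summing over $a \in A$ yields $|w| \leq N$. This uniform bound means that no sequence of legal firings from $\xx.\qq$ can continue indefinitely, so every legal execution for $\xx.\qq$ is finite. For the length equality, let $w$ and $w'$ both be complete legal executions. Applying the least action principle with $w$ legal and $w'$ complete gives $|w|_a \leq |w'|_a$ for all $a$; applying it again with the roles swapped (using that $w'$ is legal and $w$ is complete) gives $|w'|_a \leq |w|_a$. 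Hence $|w|_a = |w'|_a$ for every $a \in A$, and summing gives $|w| = |w'|$.

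There is no real obstacle here, since the substantive content is already packaged inside the least action principle imported from Part I. The only subtlety worth flagging is that a complete execution need not itself be legal: what drives the bound on legal executions is simply the existence of \emph{some} finite word $w^*$ that drains $\xx$ to a nonpositive vector, and the nonexistence of such a word is precisely alternative (1). The dichotomy is therefore tight: the same inequality that makes legal executions terminate also forces all complete legal executions to have identical letterwise counts.
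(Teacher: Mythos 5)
Your argument is correct, but note that the paper does not actually prove this lemma: it is quoted verbatim from \cite[Lemma~4.4]{part1} (Part~I of the series), so there is no proof in the present paper to compare against. Your deduction from the least action principle, which is itself imported from Part~I and stated immediately before the lemma, is the natural one, and every step is sound: the existence of a single finite complete execution $w^*$ forces $|w|_a\le |w^*|_a$ for every legal $w$, and the two-sided application of the principle to a pair of complete legal executions gives coordinatewise equality of their letter counts.

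Two small notational clarifications. First, in this paper $|w|$ denotes the \emph{vector} $(|w|_a)_{a\in A}\in\N^A$ (see the definition of $\pi_\yy$ just before the lemma), not the scalar length; so the conclusion $|w|=|w'|$ is precisely the coordinatewise statement $|w|_a=|w'|_a$ for all $a$ that you already derived, and the final ``summing'' step is unnecessary (indeed your argument proves this stronger vector equality). Second, since an execution is by definition a finite word in $A^*$, the clause ``every legal execution for $\xx.\qq$ is finite'' should be read as ``the lengths $\sum_{a\in A}|w|_a$ of legal executions $w$ are uniformly bounded,'' i.e.\ there is no infinite word all of whose prefixes are legal (this is exactly how non-halting is formalized later, in the proof of Lemma~\ref{l.amp}); your uniform bound $|w|\le |w^*|$ delivers exactly this.
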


It follows readily from the axioms of an abelian network 
(see \cite[Lemma~4.2]{part1}) 
that $\pi_a \circ \pi_b = \pi_b \circ \pi_a$ for all $a,b \in A$ and hence $\pi_w$ depends only on $|w|$. For $\yy \in \N^A$, write $\pi_\yy$ to mean $\pi_w$ for any word $w$ such that $|w|=\yy$.  
We record here two basic properties of $\pi_\yy$. 

\begin{lemma} \label{l.piprops}
For all $\yy,\zz \in \N^A$ we have
\begin{enumerate}
\item[(i)] $\pi_{\yy+\zz} = \pi_\yy \circ \pi_\zz$.
\item[(ii)] For all $\xx,\ww \in \Z^A$ and all $\qq \in Q$, if $\pi_\yy(\xx.\qq) = \xx'.\qq'$, then  \[ \pi_\yy((\xx+\ww).\qq) = (\xx'+\ww).\qq'. \]
\end{enumerate}
\end{lemma}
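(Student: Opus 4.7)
For part (i), the plan is to reduce to the word-level statement and invoke the hypothesis (stated just before the lemma) that $\pi_w$ depends only on $|w|$. Concretely, pick any word $u \in A^*$ with $|u|=\yy$ and any word $v \in A^*$ with $|v|=\zz$; then $|uv|=\yy+\zz$, and the definition $\pi_w = \pi_{a_1}\circ\cdots\circ\pi_{a_r}$ gives $\pi_{uv} = \pi_u \circ \pi_v$ directly by composition. Rewriting each side in terms of $\pi_{(\cdot)}$ on vectors yields $\pi_{\yy+\zz}=\pi_\yy\circ\pi_\zz$. There is no real obstacle here; the content is just bookkeeping on top of the commutativity fact already recorded.

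For part (ii), the plan is to show that the single-letter map $\pi_a$ is translation-equivariant in the $\xx$-coordinate, and then iterate. Recall the explicit description of $\pi_a$ given just before Lemma~\ref{l.dichotomy}: if $a \in A_v$, then $\pi_a(\xx.\qq) = (\xx + \Delta_a(\qq_v)).(t_a\qq)$, where $\Delta_a(\qq_v)\in\Z^A$ encodes the $-1$ on coordinate $a$ together with the message counts produced on each outgoing edge. The key point is that both the new internal state $t_a\qq$ and the increment $\Delta_a(\qq_v)$ depend on $\qq$ but not on $\xx$. Therefore
\[ \pi_a((\xx+\ww).\qq) = ((\xx+\ww)+\Delta_a(\qq_v)).(t_a\qq) = ((\xx+\Delta_a(\qq_v))+\ww).(t_a\qq), \]
which is precisely $\pi_a(\xx.\qq)$ with $\ww$ added to the letter-count vector. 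Choose a word $w=a_1\cdots a_r$ with $|w|=\yy$ and induct on $r$: the inductive hypothesis gives that after executing $a_1\cdots a_{r-1}$ the two runs differ exactly by $\ww$ in the letter-count vector and agree in internal state, and the equivariance of $\pi_{a_r}$ just displayed preserves this relation. Since the endpoint $\pi_w(\xx.\qq)=\pi_\yy(\xx.\qq)$ is independent of the choice of $w$ by part (i), the conclusion follows.

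The only thing that could conceivably go wrong is the step where we need $\Delta_a(\qq_v)$ to be independent of $\xx$; this is guaranteed by the definition of an abelian network, where the message-passing function $T_{(v,u)}:A_v\times Q_v\to A_u^*$ takes only a letter and an internal state as inputs, not the letter-count vector. So I do not anticipate any substantive obstacle; the proof is a one-line induction once one isolates the equivariance property.
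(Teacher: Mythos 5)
Your proposal is correct and follows essentially the same route as the paper: part (i) is proved identically by concatenating words $u,v$ with $|u|=\yy$, $|v|=\zz$. For part (ii) the paper simply cites equation (4) of the Abelian Networks I paper (additional letters affect neither the messages passed nor the final state), and your single-letter translation-equivariance plus induction is just a self-contained verification of that same fact, so there is no gap.
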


\begin{proof}
For part (i), if $|w|=\yy$ and $|w'|=\zz$ then $\pi_{\yy+\zz} = \pi_{ww'} = \pi_w \circ \pi_{w'} = \pi_\yy \circ \pi_\zz$.
 Part (ii) is immediate from 
\cite[eq.\ (4)]{part1}; 
it says that any additional letters $\ww$ present during an execution affect neither the messages passed $\xx'-\xx$ nor the final state $\qq'$.
\end{proof}

\section{Total kernel and production matrix}
\label{s.KP}

In this section we continue the development of the foundations of abelian networks begun in \cite{part1}. We associate two algebraic objects to an abelian network, the \emph{total kernel} $K$ and \emph{production matrix} $P$.  Only $P$ figures in our criterion for halting on all inputs, but we will see that $K$ is the natural domain of $P$ considered as a $\Z$-linear map. In the sequel \cite{part3}, both $K$ and $P$ play an essential role in analyzing the critical group of an abelian network that halts on all inputs.

\subsection{The local action}
\label{s.localaction}

Given $\xx \in \N^A$ and $\qq \in Q$, define
	\[ \xx \acts \qq := \pi_{\xx}(\xx.\qq). \]
In words, $\xx \acts \qq \in \N^A \times Q$ is the result of performing the following operation starting from state $\qq$: ``for each $a \in A$ add $\xx_a$ letters of type $a$ and process each letter once.''  If $\xx \acts \qq = \yy.\rr$ then message passing produced a total of $\yy_a$ letters of type $a$ for each $a \in A$, and the resulting state of the processor at vertex $v$ was $\rr_v$ for each $v \in V$.

For any $\zz \in \Z^A$ we also write
	\[ \xx \acts (\zz.\qq) := \pi_{\xx}((\xx+\zz).\qq). \]
The next lemma shows that $\acts$ defines a monoid action of $\N^A$ on $\Z^A \times Q$. We call this the \emph{local action} because each processor $\Proc_v$ processes only the letters $\xx_v$ that were added at $v$. 
		
\begin{lemma}
\label{l.localaction}
For any $\xx,\yy \in \N^A$ and $\zz \in \Z^A$ and any $\qq \in Q$,
 	\[ \xx \acts (\yy \acts (\zz.\qq)) = (\xx + \yy) \acts (\zz.\qq). \]
\end{lemma}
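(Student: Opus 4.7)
The plan is to unfold the definition of $\acts$ on both sides and reduce everything to statements about $\pi_\bullet$ that follow from Lemma~\ref{l.piprops}. Nothing deeper should be needed; the content of the lemma is really just a bookkeeping check that the monoid axiom for $\acts$ is encoded in the additivity and translation-invariance of $\pi$.

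First, I would introduce notation for the intermediate term by writing $\yy \acts (\zz.\qq) = \pi_\yy((\yy+\zz).\qq) =: \ww.\rr$ with $\ww \in \Z^A$ and $\rr \in Q$. Then the left-hand side of the identity becomes $\xx \acts (\ww.\rr) = \pi_\xx((\xx+\ww).\rr)$ by definition. The right-hand side becomes $(\xx+\yy) \acts (\zz.\qq) = \pi_{\xx+\yy}((\xx+\yy+\zz).\qq)$.

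Next I would manipulate the right-hand side so it matches the left. By Lemma~\ref{l.piprops}(i), $\pi_{\xx+\yy} = \pi_\xx \circ \pi_\yy$, so
\[ \pi_{\xx+\yy}((\xx+\yy+\zz).\qq) = \pi_\xx\bigl(\pi_\yy((\xx+\yy+\zz).\qq)\bigr). \]
Now apply Lemma~\ref{l.piprops}(ii) to the inner occurrence of $\pi_\yy$: with the ``extra letters'' $\ww$ of the lemma specialized to $\xx$, the identity $\pi_\yy((\yy+\zz).\qq) = \ww.\rr$ lifts to $\pi_\yy((\xx+\yy+\zz).\qq) = (\xx+\ww).\rr$. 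Substituting, the right-hand side becomes $\pi_\xx((\xx+\ww).\rr)$, which equals the left-hand side.

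The only thing one has to be careful about is the $\Z$-versus-$\N$ typing: the intermediate vector $\ww$ lives in $\Z^A$ rather than $\N^A$, which is exactly why the definition of $\acts$ had to be extended to second arguments of the form $\zz.\qq$ with $\zz \in \Z^A$, and why part (ii) of Lemma~\ref{l.piprops} is phrased with $\xx, \ww \in \Z^A$. Once this is in place, there is no real obstacle — the proof is a two-line application of Lemma~\ref{l.piprops}.
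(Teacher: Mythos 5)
Your proof is correct and follows essentially the same route as the paper's: unfold both sides via the definition of $\acts$, split $\pi_{\xx+\yy} = \pi_\xx \circ \pi_\yy$ using Lemma~\ref{l.piprops}(i), and use Lemma~\ref{l.piprops}(ii) to account for the extra letters. The only cosmetic difference is that you name the intermediate state $\yy \acts (\zz.\qq) = \ww.\rr$ directly and invoke part (ii) once, while the paper names $\yy \acts \qq = \yy'.\qq'$ and applies part (ii) twice; both are fine.
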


\begin{proof}
Write $\yy \acts \qq = \yy'.\qq'$. By Lemma~\ref{l.piprops}(i), 
  since $\pi_{\xx+\yy} = \pi_\xx \circ \pi_\yy$, we have
\begin{align*}
	 (\xx + \yy) \acts (\zz.\qq)
	&= \pi_{\xx+\yy}((\xx+\yy+\zz).\qq) \nonumber \\
	&= \pi_\xx (\pi_\yy ((\xx+\yy+\zz).\qq)) \nonumber \\
	&=  \pi_\xx((\xx+\yy'+\zz).\qq') \nonumber \\
	&= \xx \acts ((\yy'+\zz).\qq') \label{e.partialaction} \\
	&= \xx \acts (\yy \acts (\zz.\qq)) \nonumber
\end{align*}
where in the third and last equalities we have used Lemma~\ref{l.piprops}(ii).
\end{proof}

The \emph{local monoid} $M_v$ of a vertex $v$ is the set of maps $Q_v \to Q_v$ generated by the maps $t_a$ for $a \in A_v$ under composition.
Write
	$ t_v : \N^{A_v} \to M_v $
for the monoid homomorphism sending $\one_a \mapsto t_a$ for $a \in A_v$.  Denote by \[ t: \N^A \to \prod_{v \in V} M_v \] the Cartesian product of the maps $t_v$.  Each $t_v$ is surjective by the definition of $M_v$, so $t$ is surjective.
 Note that if $\yy \acts \qq = \zz.\rr$, then $\rr = t(\yy)\qq$.    Given $\mm \in \prod_{v \in V} M_v$, write $\mm\qq = (\mm_v \qq_v)_{v \in V}$. In general, knowing $\yy \acts \qq = \zz.\rr$ does not determine $\yy \acts (\mm \qq)$, but the next lemma shows that it does in the case $\rr=\qq$.

\begin{lemma}
\label{l.manyamplifiers}
If $\yy \acts \qq = \zz.\qq$, then $\yy \acts (\mm\qq) = \zz.(\mm\qq)$ for all $\mm \in \prod_{v \in V} M_v$.
\end{lemma}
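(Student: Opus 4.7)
The plan is to exploit surjectivity of $t : \N^A \to \prod_v M_v$ (established immediately above the lemma) to realize $\mm$ as $t(\xx)$ for some $\xx \in \N^A$, and then evaluate the total local action $(\xx+\yy) \acts \qq$ in the two natural orders via Lemma~\ref{l.localaction}. Equating the two answers should pin down $\yy \acts (\mm\qq)$ by appealing to Lemma~\ref{l.piprops}(ii), whose whole point is that letters not being processed affect neither the final state nor the messages emitted.

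Concretely, I would pick $\xx \in \N^A$ with $t(\xx) = \mm$ and let $\xx' \in \N^A$ be defined by $\xx \acts \qq = \xx'.(\mm\qq)$, i.e.\ $\pi_\xx(\xx.\qq) = \xx'.(\mm\qq)$. Processing $\yy$ first:
\[
(\xx+\yy) \acts \qq = \xx \acts (\yy \acts \qq) = \xx \acts (\zz.\qq) = \pi_\xx((\xx+\zz).\qq),
\]
and Lemma~\ref{l.piprops}(ii) with $\ww=\zz$ turns this into $(\xx'+\zz).(\mm\qq)$. Processing $\xx$ first:
\[
(\xx+\yy) \acts \qq = \yy \acts (\xx \acts \qq) = \yy \acts (\xx'.(\mm\qq)) = \pi_\yy((\yy+\xx').(\mm\qq)).
\]
Denote $\yy \acts (\mm\qq) = \zz'.\qq''$. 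The hypothesis $\yy \acts \qq = \zz.\qq$ forces $t(\yy)\qq = \qq$, and since the local monoids $M_v$ act on disjoint coordinates of $Q$ their images commute, so $\qq'' = t(\yy)\mm\qq = \mm t(\yy)\qq = \mm\qq$. Applying Lemma~\ref{l.piprops}(ii) again with $\ww=\xx'$ rewrites the right-hand side as $(\zz'+\xx').(\mm\qq)$.

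Equating the two evaluations of $(\xx+\yy)\acts\qq$ gives $(\xx'+\zz).(\mm\qq) = (\zz'+\xx').(\mm\qq)$, hence $\zz = \zz'$, which is the desired identity $\yy \acts (\mm\qq) = \zz.(\mm\qq)$. The main obstacle is nothing conceptual but purely bookkeeping: keeping track of which vector in $\N^A$ plays the role of ``letters being processed'' versus ``extra letters along for the ride'' at each invocation of Lemma~\ref{l.piprops}(ii). Once one commits to the decomposition $\xx+\yy$ and applies associativity of $\acts$ in both orders, the claim falls out mechanically.
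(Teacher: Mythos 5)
Your proof is correct and essentially the paper's own argument: realize $\mm = t(\xx)$ by surjectivity of $t$, evaluate $(\xx+\yy)\acts\qq$ in both orders using Lemma~\ref{l.localaction}, and strip off the extra letters with Lemma~\ref{l.piprops}(ii). The only difference is your side computation that $\qq'' = \mm\qq$ (which needs commutativity of each $M_v$, not just that different vertices act on disjoint coordinates), and this step is in fact unnecessary since $\qq''=\mm\qq$ falls out automatically when you equate the two evaluations.
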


\begin{proof}
Since $t$ is surjective there exists $\uu \in \N^A$ such that $t(\uu) = \mm$.  Then $\uu \acts \qq = \ww.\mm\qq$ for some $\ww \in \N^A$.  By Lemma~\ref{l.localaction},
	\[ \yy \acts (\ww.\mm\qq) = \yy \acts (\uu \acts \qq) = \uu \acts (\yy \acts \qq) = \uu \acts (\zz.\qq) = (\ww+\zz).\mm\qq. \]
From Lemma~\ref{l.piprops}(ii) it follows that $\yy \acts \mm\qq = \zz.\mm\qq$.
\end{proof}

\subsection{Production matrix}
\label{s.production}

From here on we assume that $\Net$ is a \emph{finite} abelian network:
that is, the underlying graph $G=(V,E)$ is finite, and the alphabet and state space of each vertex are finite sets.
The main ingredients that rely on finiteness are the results from \textsection\ref{s.monoid}.

Let $e_v$ be the minimal idempotent of the local monoid $M_v$. The recurrent elements (Lemma~\ref{l.recurrent}) of the monoid action $M_v \times Q_v \to Q_v$ play a special role in this section.

\begin{definition}\label{d.recurrent}
A state $\qq \in Q$ is \emph{locally recurrent} if $\qq_v \in e_v Q_v$ for all $v \in V$.
(Equivalently, $\qq_v = e_v \qq_v$ for all $v \in V$.)
\end{definition}

By Lemma~\ref{l.actsinvertibly}, every $m \in M_v$ acts invertibly on $e_v Q_v$.  Thus for each $a \in A_v$ the map $q \mapsto T_v(a,q)$ is a permutation of $e_v Q_v$, so we have a group action
	\[ \Z^{A_v} \times e_v Q_v \to e_v Q_v. \]
Let $K_v$ be the set of vectors in $\Z^{A_v}$ that act as the identity on $e_v Q_v$.

\begin{definition}\label{d.totalkernel}
The \emph{total kernel} of $\Net$ is the subgroup of $\Z^A$ given by
	\[ K = \prod_{v \in V} K_v. \]
\end{definition}
	
\begin{lemma}
\label{l.fullrank}
If $\Net$ is finite, then 
$K$ is a subgroup of finite index in $\Z^A$. In particular, $K$ is generated as a group by $K \cap \N^A$.
\end{lemma}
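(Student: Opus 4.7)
The plan is to reduce the statement to a local claim at each vertex: if each $K_v$ has finite index in $\Z^{A_v}$, then $K = \prod_v K_v$ automatically has finite index in $\prod_v \Z^{A_v} = \Z^A$. So the heart of the proof is the local claim $[\Z^{A_v} : K_v] < \infty$.

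To establish this local claim, I would exploit the group action $\Z^{A_v} \times e_v Q_v \to e_v Q_v$ set up just before the definition of $K_v$ (which exists thanks to Lemma~\ref{l.actsinvertibly} guaranteeing that every $m \in M_v$ acts invertibly on $e_v Q_v$). This action is the same thing as a group homomorphism
\[ \varphi_v : \Z^{A_v} \longrightarrow \mathrm{Sym}(e_v Q_v), \]
and by definition $K_v = \ker \varphi_v$. Since $\Net$ is finite, $e_v Q_v \subseteq Q_v$ is a finite set, so $\mathrm{Sym}(e_v Q_v)$ is a finite group. Hence $\mathrm{Im}\,\varphi_v$ is finite, and by the first isomorphism theorem $[\Z^{A_v} : K_v] = \#\mathrm{Im}\,\varphi_v < \infty$. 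Taking the product over $v$ yields that $K$ has finite index in $\Z^A$.

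For the second sentence, I would invoke the general fact that any finite-index subgroup $K \subset \Z^A$ is generated (as a group) by $K \cap \N^A$. Concretely: since $K$ has finite index, some positive multiple $N\mathbf{1}$ of the all-ones vector lies in $K$, where $\mathbf{1} \in \N^A$ has every coordinate equal to $1$. Given any $\vv \in K$, choose $M \in \N$ large enough that $\vv + MN\mathbf{1} \in \N^A$; then both $\vv + MN\mathbf{1}$ and $MN\mathbf{1}$ lie in $K \cap \N^A$, and $\vv$ is their difference. Thus $K \cap \N^A$ generates $K$ as a group.

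I do not anticipate any serious obstacles here; the only subtle point is making sure the map $\varphi_v$ is well-defined on all of $\Z^{A_v}$ (as opposed to just $\N^{A_v}$), which is precisely the content of Lemma~\ref{l.actsinvertibly}. The finiteness of $\Net$ enters exactly once, in asserting $\#\mathrm{Sym}(e_v Q_v) < \infty$, matching the hypothesis of the lemma.
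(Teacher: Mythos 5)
Your proof is correct and takes essentially the same approach as the paper: both reduce to showing each $K_v$ has finite index in $\Z^{A_v}$ using finiteness of $e_v Q_v$ (the paper notes every $\xx \in \Z^{A_v}$ has a multiple in $K_v$, you phrase it as the kernel of a homomorphism into the finite group $\mathrm{Sym}(e_v Q_v)$ -- the same finiteness, repackaged), then take the product over the finite vertex set and finish with the positive-vector trick for the last sentence. Your final step merely spells out explicitly what the paper compresses into the remark that $K$ contains a vector with all coordinates strictly positive.
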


\begin{proof}
Since 
each $e_v Q_v$ is a finite set, for any $\xx \in \Z^{A_v}$ we have $n\xx \in K_v$ for some $n \geq 1$. Thus $K_v$ has finite index in $\Z^{A_v}$.  
Since 
$V$ is a finite set, $K = \prod_{v\in V} K_v$ has finite index in $\Z^A$.
In particular, $K$ contains a vector with all coordinates strictly positive, which implies that $K$ is generated as a group by $K \cap \N^A$.
\end{proof}

Note that
	\begin{equation} \label{e.localkernel} K \cap \N^A = \{\xx \in \N^A \mid t(\xx) \qq = \qq \mbox{ for all locally recurrent } \qq \in Q \}. \end{equation}
Fix a locally recurrent state $\qq \in Q$.  For any $\kk \in K \cap \N^A$ we have
	\begin{equation} \label{e.production} \kk \acts \qq =  P_{\qq}(\kk).\qq \end{equation}
for some vector $P_{\qq}(\kk) \in \N^A$.  Next we show that
	\[ P_{\qq} : K \cap \N^A \to \N^A \]
extends to a group homomorphism.

\begin{lemma}
\label{l.extends}
Let $\Net$ be a finite abelian network.  Then $P_{\qq}$ extends to a group homomorphism $K \to \Z^A$.  
\end{lemma}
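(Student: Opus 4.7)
The plan is a two-step construction: first establish that $P_\qq$ is additive on the monoid $K \cap \N^A$, and then extend to $K$ by the standard group-completion trick, using Lemma~\ref{l.fullrank} to guarantee that every element of $K$ is a difference of elements of $K \cap \N^A$.

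For additivity on $K \cap \N^A$, fix $\kk_1,\kk_2 \in K \cap \N^A$. By the definition of the local action and Lemma~\ref{l.localaction},
\[ (\kk_1+\kk_2) \acts \qq = \kk_1 \acts (\kk_2 \acts \qq) = \kk_1 \acts (P_\qq(\kk_2).\qq) = \pi_{\kk_1}\bigl((\kk_1 + P_\qq(\kk_2)).\qq\bigr). \]
Since $\pi_{\kk_1}(\kk_1.\qq) = P_\qq(\kk_1).\qq$, Lemma~\ref{l.piprops}(ii) (applied with the extra letters $\ww = P_\qq(\kk_2)$) turns the last expression into $(P_\qq(\kk_1) + P_\qq(\kk_2)).\qq$. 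Because $\kk_1+\kk_2 \in K \cap \N^A$ as well, the left-hand side is $P_\qq(\kk_1+\kk_2).\qq$, and comparing first coordinates yields
\[ P_\qq(\kk_1+\kk_2) = P_\qq(\kk_1) + P_\qq(\kk_2). \]
The identity $P_\qq(\zero) = \zero$ is immediate, so $P_\qq$ is a monoid homomorphism.

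For the extension, Lemma~\ref{l.fullrank} says $K$ is generated as a group by $K \cap \N^A$, so any $\kk \in K$ can be written $\kk = \kk_+ - \kk_-$ with $\kk_\pm \in K \cap \N^A$. Set $P_\qq(\kk) := P_\qq(\kk_+) - P_\qq(\kk_-) \in \Z^A$. To check this is well defined, suppose $\kk_+ - \kk_- = \kk_+' - \kk_-'$. Then $\kk_+ + \kk_-' = \kk_+' + \kk_-$ in $K \cap \N^A$; applying additivity to both sides and rearranging yields $P_\qq(\kk_+) - P_\qq(\kk_-) = P_\qq(\kk_+') - P_\qq(\kk_-')$. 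The homomorphism property $P_\qq(\kk+\kk') = P_\qq(\kk) + P_\qq(\kk')$ on $K$ then follows by expressing both $\kk$ and $\kk'$ as differences and invoking additivity on $K \cap \N^A$ once more.

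There isn't really a hard step here; the only subtlety is ensuring that when we invoke Lemma~\ref{l.localaction} and pass from $\kk_1 \acts (\kk_2 \acts \qq)$ to $\kk_1 \acts (P_\qq(\kk_2).\qq)$, we are using that the final internal state after processing $\kk_2$ is still $\qq$ (which is exactly the hypothesis $\kk_2 \in K$ combined with local recurrence of $\qq$). Once that identification is made, the additivity computation is forced by Lemma~\ref{l.piprops}(ii), and the extension to $K$ is the routine Grothendieck-group argument.
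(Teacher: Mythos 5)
Your proof is correct and takes essentially the same route as the paper: additivity on $K\cap\N^A$ via Lemma~\ref{l.localaction} together with Lemma~\ref{l.piprops}(ii), followed by the standard extension from a submonoid to the group it generates, with well-definedness supplied by Lemma~\ref{l.fullrank}. You unpack the single displayed equality in the paper into two or three smaller steps, but the underlying argument is identical.
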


\begin{proof}
Write $P=P_\qq$.
Let $\kk_1, \kk_2 \in K \cap \N^A$.  By Lemma~\ref{l.localaction},
	\[ (\kk_1+\kk_2) \acts \qq = \kk_1 \acts (P(\kk_2).\qq) = (P(\kk_1)+P(\kk_2)).\qq \]
hence
	\begin{equation} \label{e.additivity} P(\kk_1 + \kk_2) = P(\kk_1) + P(\kk_2). \end{equation}

By Lemma~\ref{l.fullrank}, every $\kk \in K$ can be written as $\kk_1 - \kk_2$ for $\kk_1, \kk_2 \in K \cap \N^A$.  Define $P(\kk) = P(\kk_1) - P(\kk_2)$.  Equation \eqref{e.additivity} now implies that this extension is well defined and a group homomorphism.
\end{proof}

By tensoring $P_{\qq} : K \to \Z^A$ with $\Q$, we obtain a linear map $P_{\qq} : \Q^A \to \Q^A$. To be more explicit, for any $\xx \in \Q^A$, by Lemma~\ref{l.fullrank} there is an integer $n \geq 1$ such that $n \xx \in K$, and we define
	\[ P_{\qq}(\xx) := \frac1n P_{\qq}(n \xx). \]
So far we have defined $P_{\qq}$ only for locally recurrent $\qq$.  We extend the definition to all states $\qq =(q_v)_{v\in V}$ by setting $P_{\qq} := P_{\widehat{\qq}}$, where $\widehat{\qq} = (e_v q_v)_{v \in V}$. 
 
\begin{definition}
\label{d.production}
The \emph{production matrix} of a finite abelian network $\Net$ with initial state $\qq$ is the matrix of the linear map $P_{\qq} : \Q^A \to \Q^A$.
\end{definition}

The $(a,b)$ entry $p_{ab}$ of the production matrix says ``on average'' how many letters~$a$ are created by processing the letter $b$: specifically, if $n\basis_b \in K$, then $p_{ab} = \frac{1}{n} p_{ab}(n)$, where $p_{ab}(n) = \mathbf{N}(b^n,q)_a$ is the number of $a$'s created by executing the word~$b^n$.
 We have chosen the term ``production matrix'' to evoke \cite{DFR05}.  Indeed the succession rules studied in that paper can be modeled by an abelian network whose underlying graph is a single vertex with a loop.

We remark that the production matrix can be also defined for some networks that are not 
finite by setting
	\[ p_{ab} := \lim_{n \to \infty} \frac{1}{n} p_{ab}(n) \]
if this limit exists.

\subsection{Local components}\label{s.localcomp}

An abelian finite automaton $\Proc$ with state space $Q$ and alphabet $A$ is specified by transition maps $t_a : Q \to Q$ for $a \in A$, such that $t_a \circ t_b = t_b \circ t_a$ for all $a,b \in A$.  The \emph{transition monoid} of $\Proc$ is the submonoid $M \subset \End(Q)$ generated by $\{t_a\}_{a \in A}$, where $\End(Q)$ denotes the monoid of all set maps $Q \to Q$ under composition. (For example, the local monoid $M_v$ of \textsection\ref{s.localaction} is the transition monoid of $\Proc_v$.)

By construction, the transition monoid of $\Proc$ has a faithful action $M \times Q \to Q$.
We say that $\Proc$ is \emph{irreducible} if this monoid action is irreducible (\textsection \ref{s.monoid}).  The \emph{irreducible components} $\Proc_\alpha$ of $\Proc$ are the automata with alphabet $A$ and state space $Q_\alpha$, where $Q = \sqcup Q_\alpha$ is the partition of $Q$ into equivalence classes under the relation $q \sim q'$ if there exist $m,m' \in M$ such that $mq = m'q'$ (see Lemma~\ref{l.chainshortening}).

We say that an abelian network $\Net = (\Proc_v)_{v \in V}$ is \emph{locally irreducible} if each processor $\Proc_v$ is irreducible.
The \emph{local components} of $\Net$ are the abelian networks $\Net_{\alpha} = (\Proc_v^{\alpha_v})_{v \in V}$ where each $\Proc_v^{\alpha_v}$ is an irreducible component of $\Proc_v$.  Note that the local components have the same underlying graph $G$ and alphabet $A$, with a possibly smaller state space at each vertex.

The next lemma should be compared with \eqref{e.localkernel}.

\begin{lemma} 
\label{l.free}
If $\Net$ is finite and locally irreducible, then for any fixed locally recurrent $\qq \in Q$ we have
	\[ K \cap \N^A = \Set{\kk \in \N^A}{t(\kk)\qq = \qq}. \]
\end{lemma}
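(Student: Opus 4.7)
The inclusion $\subseteq$ is immediate from the characterization \eqref{e.localkernel} of $K \cap \N^A$, since fixing every locally recurrent state certainly implies fixing the particular $\qq$. For the reverse inclusion, since $K = \prod_{v \in V} K_v$ and the condition $t(\kk)\qq = \qq$ decouples into $t_v(\kk_v) q_v = q_v$ for every $v$, I would reduce to the following per-vertex claim: if $\Proc_v$ is irreducible, $q_v \in e_v Q_v$, and $\kk_v \in \N^{A_v}$ satisfies $t_v(\kk_v) q_v = q_v$, then $\kk_v \in K_v$, i.e.\ $t_v(\kk_v)$ fixes every element of $e_v Q_v$.

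The engine is Theorem~\ref{t.monoidtogroup} applied to the local monoid action $M_v \times Q_v \to Q_v$. This action is faithful because $M_v$ is by construction a submonoid of $\End(Q_v)$, and it is irreducible by the local irreducibility hypothesis on $\Net$. The theorem then yields that $e_v M_v \times e_v Q_v \to e_v Q_v$ is a \emph{free and transitive} group action with identity element $e_v$. Moreover, by the calculation preceding Lemma~\ref{l.actsinvertibly}, the elements $t_v(\kk_v)$ and $e_v t_v(\kk_v)$ induce the same map on $e_v Q_v$, so one may work entirely inside the group $e_v M_v$.

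Now apply $e_v$ to both sides of $t_v(\kk_v) q_v = q_v$; since $q_v$ is locally recurrent, $e_v q_v = q_v$, and associativity of the action gives $\bigl(e_v t_v(\kk_v)\bigr) q_v = q_v$. By freeness of $e_v M_v \curvearrowright e_v Q_v$, this forces $e_v t_v(\kk_v) = e_v$, the group identity, and hence $t_v(\kk_v)$ acts as the identity on all of $e_v Q_v$. Thus $\kk_v \in K_v$, as desired. The one conceptual step, which I expect to be the only subtle point, is recognizing that local irreducibility together with Theorem~\ref{t.monoidtogroup} is exactly what upgrades the hypothesis ``fixes one recurrent state'' to ``fixes the entire recurrent set,'' via freeness of the associated group action; without local irreducibility the group action need not be free, and fixing a single $q_v$ says nothing about the other recurrent states.
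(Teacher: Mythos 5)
Your proof is correct and follows essentially the same route as the paper: local irreducibility plus faithfulness of $M_v \subset \End(Q_v)$ feed into Theorem~\ref{t.monoidtogroup}, and freeness of the action of $e_v M_v$ on $e_v Q_v$ upgrades ``$t_v(\kk_v)$ fixes one recurrent state'' to ``$t_v(\kk_v)e_v = e_v$,'' hence $\kk \in K$. The only differences are cosmetic: you spell out the trivial inclusion via \eqref{e.localkernel} and the faithfulness check, which the paper leaves implicit.
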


\begin{proof}
By Theorem~\ref{t.monoidtogroup} the action of $e_v M_v$ on $e_v Q_v$ is free.
If $\kk \in \N^A$ and $t(\kk)\qq = \qq$ for one locally recurrent $\qq$, then for all $v \in V$ 
	\[  t_v(\kk_v) e_v \qq_v = t_v(\kk_v) \qq_v = \qq_v. \]
By freeness it follows that $t_v(\kk_v) e_v = e_v$, so $t(\kk) \rr = \rr$ for all locally recurrent $\rr$.  Hence $\kk \in K$.
\end{proof}

For $\qq,\rr \in Q$ we write $\qq \sim \rr$ if $\qq$ and $\rr$ belong to the same local component (that is, $\qq_v \sim \rr_v$ for all $v \in V$).  We denote by $\Net_\qq$ the local component of $\Net$ containing state $\qq$. The next lemma shows that within a local component all states have the same production matrix.

\begin{lemma}
\label{l.indepofq}
~
\begin{enumerate}
\item[(i)] In a locally irreducible finite abelian network, the production matrix $P_{\qq}$ does not depend on the initial state $\qq$.
\item[(ii)] In any finite abelian network, if $\qq \sim \rr$ then $P_\qq = P_\rr$.
\end{enumerate}
\end{lemma}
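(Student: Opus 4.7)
The plan is to prove (i) directly using transitivity from Theorem~\ref{t.monoidtogroup} combined with Lemma~\ref{l.manyamplifiers}, and then to reduce (ii) to (i) by restricting to a local component.

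For part (i), since $P_\qq := P_{\hat\qq}$ by construction, I may assume $\qq$ and $\qq'$ are both locally recurrent. Local irreducibility together with Theorem~\ref{t.monoidtogroup} gives that $e_v M_v$ acts transitively on $e_v Q_v$ for each $v$, so I can pick $g_v \in M_v$ with $g_v \qq_v = \qq'_v$ and bundle them into $\mm = (g_v)_{v \in V} \in \prod_v M_v$ satisfying $\mm \qq = \qq'$. For any $\kk \in K \cap \N^A$, equation~\eqref{e.production} reads $\kk \acts \qq = P_\qq(\kk).\qq$, so Lemma~\ref{l.manyamplifiers} yields
\[ \kk \acts \qq' = \kk \acts (\mm \qq) = P_\qq(\kk).(\mm \qq) = P_\qq(\kk).\qq'. \]
Comparing with \eqref{e.production} at $\qq'$ gives $P_{\qq'}(\kk) = P_\qq(\kk)$ for all $\kk \in K \cap \N^A$, and Lemma~\ref{l.fullrank} combined with $\Z$-linearity (Lemma~\ref{l.extends}) extends this equality to all of $\Q^A$.

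For part (ii), I pass to the local component $\Net_\qq = \Net_\rr$, which is locally irreducible by construction and contains both $\qq$ and $\rr$. The key point is that the production matrix is intrinsic to the local component: for any $\kk$ in the total kernel $K$ of $\Net$, the trajectory of $\kk \acts \qq$ never leaves $\Net_\qq$ (each local component is closed under the dynamics), so the vector $P_\qq(\kk) \in \N^A$ defined by \eqref{e.production} is the same whether computed in $\Net$ or in $\Net_\qq$. Provided $K$ sits inside the total kernel $K^{\Net_\qq}$ of $\Net_\qq$, applying (i) inside $\Net_\qq$ gives $P_\qq(\kk) = P_\rr(\kk)$ for every $\kk \in K \cap \N^A$, and linearity again extends this to $\Q^A$.

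The containment $K \subseteq K^{\Net_\qq}$ is the only slightly subtle point, and I expect it to be the main (if still short) obstacle. It reduces to the observation that the minimal idempotent of the transition monoid of the irreducible component $\Proc_v^{\alpha_v}$ equals the restriction of $e_v$ to $Q_v^{\alpha_v}$, so that a vector fixing all of $e_v Q_v$ also fixes the locally recurrent states of $\Net_\qq$. This in turn is the general fact that minimal idempotents descend along surjective homomorphisms $\phi : M \twoheadrightarrow M'$ of finite commutative monoids: if $e,e'$ are the respective minimal idempotents, then $\phi(e)$ is an idempotent so $e' \phi(e) = e'$, while lifting $e'$ to an idempotent $f \in M$ (take an idempotent power of any preimage of $e'$) and using minimality of $e$ gives $\phi(e) e' = \phi(ef) = \phi(e)$; commutativity then collapses these to $\phi(e) = e'$.
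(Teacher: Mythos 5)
Your argument is correct and follows essentially the same route as the paper: part (i) uses transitivity of the $e_vM_v$-action on $e_vQ_v$ from Theorem~\ref{t.monoidtogroup} to move between locally recurrent states (you invoke Lemma~\ref{l.manyamplifiers} where the paper redoes the commutation of the local action and cancels in $\N^A$, but this is the same computation), and part (ii) is the paper's reduction of the claim to part (i) applied to the local component $\Net_\qq$. The only real difference is that you spell out what the paper leaves implicit in (ii) --- that the component is closed under the dynamics and that $K$ sits inside the total kernel of $\Net_\qq$, via the descent of minimal idempotents along the surjective restriction homomorphism $M_v \twoheadrightarrow M_v^{\alpha_v}$ --- and that verification is correct.
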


\begin{proof}
(i) Let $\qq_1,\qq_2 \in Q$ be locally recurrent states.  
By Theorem~\ref{t.monoidtogroup} each group action $e_v M_v \times e_v Q_v \to e_v Q_v$ is transitive, so there exists $\xx \in \N^A$ such that $t(\xx) \qq_1 = \qq_2$.  Let $\zz$ be such that
	\[ \xx \acts \qq_1 = \zz.\qq_2. \]
Fix $\kk \in K \cap \N^A$, and let $\yy_i = P_{\qq_i}(\kk)$ for $i=1,2$.  Then
	\[ \xx \acts (\kk \acts \qq_1) = \xx \acts (\yy_1.\qq_1) = (\yy_1+\zz).\qq_2 \]
while
	\[ \kk \acts (\xx \acts \qq_1) = \kk \acts (\zz.\qq_2) = (\yy_2+\zz).\qq_2. \]
By Lemma~\ref{l.localaction},
	\[ \kk \acts (\xx \acts \qq) = \xx \acts (\kk \acts \qq) \]
hence $\yy_1 + \zz = \yy_2 + \zz$, and hence $\yy_1 = \yy_2$.  Since $\kk \in K \cap \N^A$ was arbitrary we conclude that $P_{\qq_1} = P_{\qq_2}$.

Part (ii) follows by applying part (i) to the local component $\Net_\qq$.
\end{proof}

\subsection{Strong components}
\label{s.strong}

Let $\Net$ be a locally irreducible finite abelian network with production matrix $P = (p_{ab})_{a,b \in A}$. In this section we describe another way to break $\Net$ into smaller pieces by reducing its alphabet $A$. (The results of this section are not used anywhere else in this paper, so the reader who wishes to skip to the halting problem in \textsection\ref{s.halting} can safely do so.)

\begin{figure}
\centering
\medskip
\includegraphics[height=.5\textheight]{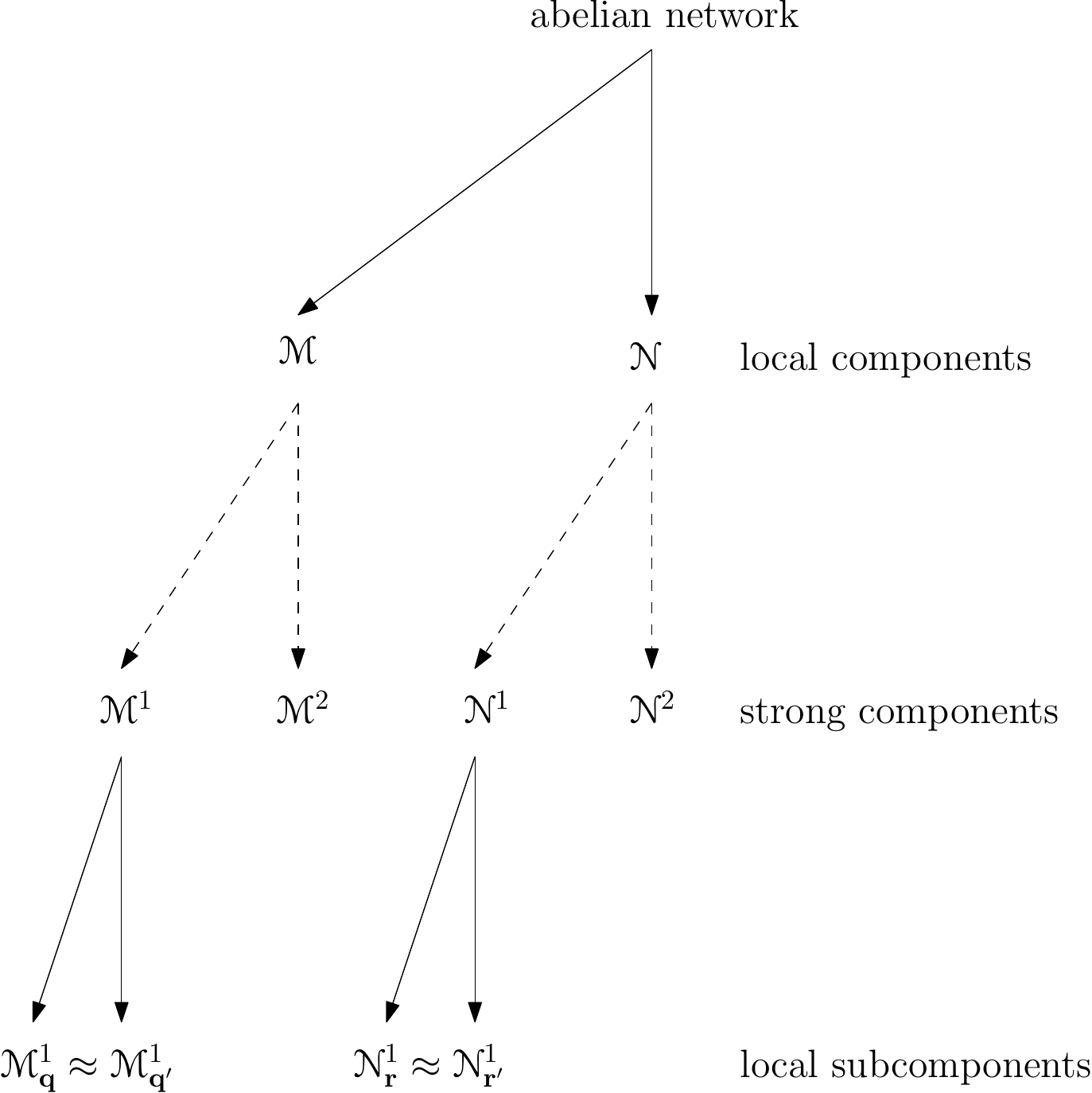}
\caption{The local components of an abelian network may decompose into strong components, which may further decompose further into local subcomponents. Solid arrows represent restrictions of the state space, and dashed arrows represent restrictions of the alphabet.}
\label{f.components}
\end{figure}

\begin{definition} \label{d.productiongraph}
The \emph{production graph} $\Gamma = \Gamma(\Net)$ of $\Net$ is the directed graph
with vertex set $A$ and edge set $\{(a,b) \mid p_{ba} >0 \}$. 
\end{definition}

Write $a \dashrightarrow b$ if there is a directed path in $\Gamma$ from $a$ to $b$.  The \emph{strong components} of $\Gamma$ are the equivalence classes of the relation $\{(a,b) \mid a \dashrightarrow b \mbox{ and } b \dashrightarrow a\}$.

\begin{definition}
The \emph{strong components} of $\Net$ are the subnetworks $\Net^1, \ldots, \Net^s$ with alphabets $A^1, \ldots, A^s$, where $A^1, \ldots, A^s$ are the strong components of $\Gamma$.
\end{definition}

Note that the local components of \textsection\ref{s.localcomp} are defined by restricting the state space, whereas the strong components are defined by restricting the alphabet.  The strong components of a locally irreducible network need not be locally irreducible (Figure~\ref{f.components}).  However, Lemma~\ref{l.subcomponents} below shows that the local components of a strong component do not decompose any further. Moreover, all local components of a given strong component are homotopic.

\begin{definition}
\label{d.homotopy}
(Homotopy)
Locally irreducible abelian networks $\Net$ and $\Net'$ on the same graph with the same total alphabet are called \emph{homotopic}, written $\Net \approx \Net'$, if they have the same total kernel $K$ and the same production matrix $P$.  
\end{definition}

The reason for calling this homotopy comes from a method of diagraming states described in \cite{part1}. The locally recurrent states of each processor correspond to points on the discrete torus $\Z^{A_v} / K_v$, with state transitions given by adding a basis vector $\one_a$ for $a \in A_v$. Message passing can be visualized by surfaces cutting between the states, with each surface labeled by the letter to be passed. 
If $\Net$
and $\Net'$
have the same total kernel $K$, then for each vertex $v$ the state diagrams of processors $\Proc_v$ and $\Proc'_v$ live on the same discrete torus.  Then $\Net$ and $\Net'$ have the same production matrix if and only if for each $v$ the state diagram of $\Proc'_v$ can be obtained from the state diagram of $\Proc_v$ by altering message surfaces without changing the homotopy type of any surface.

\begin{lemma}
\label{l.subcomponents}
Let $\Net$ be a locally irreducible finite abelian network with strong components $\Net^i$, total kernel $K$ and production matrix $P$.  Let $\Net^i_\qq$ be the local component of $\Net^i$ containing state $\qq$.  The following hold for all $\qq,\qq' \in Q$.
\begin{enumerate}
\item[(i)] The total kernel of $\Net^i_\qq$ is the group generated by $K \cap \N^{A^i}$.
\item[(ii)] The production matrix of $\Net^i_\qq$ is the $A^i \times A^i$ submatrix $P_{ii}$ of $P$.  
\item[(iii)] $\Net^i_\qq$ has only one strong component. 
\item[(iv)] $\Net^i_\qq \approx \Net^i_{\qq'}$.
\end{enumerate}
\end{lemma}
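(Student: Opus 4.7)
The plan is to first establish (i) and (ii) in the case that $\qq$ is locally recurrent in $\Net$; (iii) is then immediate from (ii), and the general case together with (iv) will follow by a transfer argument. The key preliminary is that $\Net^i_\qq$ is locally irreducible by construction as a local component, so Lemmas~\ref{l.free} and~\ref{l.indepofq} apply to it. Moreover, if $\qq$ is locally recurrent in $\Net$, then $\qq$ is also locally recurrent in $\Net^i_\qq$: since $e_v$ is the minimum idempotent of $M_v$ it absorbs $e^i_v$, giving $e_v e^i_v = e_v$, and hence $e^i_v \qq_v = e^i_v(e_v\qq_v) = e_v\qq_v = \qq_v$.

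With $\qq$ locally recurrent in both networks, applying Lemma~\ref{l.free} to each using $\qq$ as the common test state yields
\[ K \cap \N^A = \{\xx \in \N^A : t(\xx)\qq = \qq\}, \qquad K^i_\qq \cap \N^{A^i} = \{\xx \in \N^{A^i} : t(\xx)\qq = \qq\}, \]
where $K^i_\qq$ denotes the total kernel of $\Net^i_\qq$. Restricting the first identity to $\xx$ supported on $A^i$ gives $K \cap \N^{A^i} = K^i_\qq \cap \N^{A^i}$, and by Lemma~\ref{l.fullrank} both sides generate their full kernels, proving (i). For (ii), any $\kk \in K \cap \N^{A^i}$ executes from $\qq$ using only $A^i$-letters, so state transitions and $A^i$-output letters are identical in $\Net$ and in $\Net^i_\qq$ (non-$A^i$ outputs are discarded in $\Net^i_\qq$); hence $P^i_\qq(\kk)$ equals the $A^i$-projection of $P\kk$, namely $P_{ii}\kk$, and linear extension gives $P^i_\qq = P_{ii}$. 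Part~(iii) is now immediate: the production graph of $\Net^i_\qq$ is the induced subgraph of $\Gamma(\Net)$ on $A^i$, which is strongly connected by the very definition of strong component.

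For arbitrary $\qq$ and for (iv), Lemma~\ref{l.indepofq}(i) applied to the locally irreducible $\Net^i_\qq$ shows that $K^i_\qq$ and $P^i_\qq$ depend only on the local component of $\Net^i$ containing $\qq$; it therefore suffices to show that different local components of $\Net^i$ share these invariants. I plan to establish this via the free, transitive action of $\prod_v e_v M_v$ on $\prod_v e_v Q_v$ from Theorem~\ref{t.monoidtogroup}: commutativity makes every element $g$ of this group commute with each $M^i_v$, so $\rr \mapsto g\rr$ is equivariant with respect to all $M^i_v$-actions and permutes the local components of $\Net^i$ that meet $\prod_v e_v Q_v$, inducing an isomorphism between any two such components. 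The main obstacle I expect is extending this correspondence to local components of $\Net^i$ that do not meet the recurrent part of $\Net$: the map $\rr \mapsto (\prod_v e_v)\rr$ from $Q^i_\qq$ into $\prod_v e_v Q_v$ need not be injective, so a careful combination of freeness of the $\prod_v e_v M_v$-action, the independence furnished by Lemma~\ref{l.indepofq}, and the orbit structure of each $M^i_v$ on its local component will be required to identify the invariants of an arbitrary local component with those of a canonical one containing a locally recurrent state of $\Net$.
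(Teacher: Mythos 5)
Your special case is essentially the paper's argument: for $\qq$ locally recurrent in $\Net$ (the paper: in $\Net^i$), apply Lemma~\ref{l.free} to $\Net$ and to $\Net^i_\qq$ at the common state $\qq$ to get $K^i_\qq\cap\N^{A^i}=K\cap\N^{A^i}$, invoke Lemma~\ref{l.fullrank} for (i), and compare the two local actions (discarding outputs outside $A^i$) for (ii); your observation $e_ve^i_v=e_v$, so that such a $\qq$ is also recurrent for $\Net^i$, is correct. Part (iii) from (ii) is also as in the paper.

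The gap is everything after that: parts (i)--(iii) for arbitrary $\qq$, and all of (iv), are not proved but only planned, and the plan as sketched does not close. Lemma~\ref{l.indepofq}(i) only varies the initial state \emph{inside} the fixed network $\Net^i_\qq$, so it cannot compare two different local components of $\Net^i$; and, as you yourself note, a local component of $\Net^i$ need not meet $\prod_v e_vQ_v$ at all, so the equivariant $\prod_v e_vM_v$-action on $\prod_v e_vQ_v$ gives no map between an arbitrary component and a ``recurrent'' one. The tool the paper uses instead, and which your proposal is missing, is Lemma~\ref{l.manyamplifiers}, combined with the reduction $\Net^i_\qq=\Net^i_{\widehat\qq}$, $\widehat\qq=(e^i_v\qq_v)_{v\in V}$, so that one may assume $\qq$ is recurrent for $\Net^i$ (not for $\Net$). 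Then for $\kk\in K^i_\qq\cap\N^{A^i}$ one has $\kk\acts\qq=\yy.\qq$ with $\rho_i(\yy)=P^i_\qq\kk$, and Lemma~\ref{l.manyamplifiers} applied with $\mm=(e_v)_{v\in V}$ transports this return identity to $\kk\acts\rr=\yy.\rr$ at the $\Net$-locally-recurrent state $\rr=(e_v\qq_v)_{v\in V}$, where $\yy=P\kk$ can be read off; no injectivity of $\rr\mapsto (e_v\rr_v)_v$ and no component-to-component isomorphism is needed, because what is transferred is the statement ``$\kk$ returns the state to itself and emits $\yy$,'' not the states themselves. Once (i) and (ii) are established for every $\qq$ in this way, the kernel and production matrix of $\Net^i_\qq$ visibly do not depend on $\qq$, so (iv) is immediate from the definition of homotopy rather than requiring the separate orbit/isomorphism argument you were attempting to set up.
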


\begin{proof}
Since $\Net^i_\qq = \Net^i_{\widehat \qq}$ where $\widehat \qq = (e^i_v\qq_v)_{v\in V}$, we may assume that $\qq$ is a locally recurrent state of $\Net^i$.

(i) By Lemma~\ref{l.fullrank}, the total kernel $K_\qq^i$ of $\Net_\qq^i$ is generated as a group by $\K_\qq^i \cap \N^{A^i}$. Since both $\Net^i_\qq$ and $\Net$ are locally irreducible, by Lemma~\ref{l.free} we have for $\kk \in \N^{A^i}$ 
	\[ \kk \in K^i_\qq \;\iff\; t(\kk)\qq = \qq \;\iff\; \kk \in K. \]
Hence \[ \K_\qq^i \cap \N^{A^i} = K \cap \N^{A^i}. \]

(ii) Write $\acts$ for the local action of $\Net$ and $\acts^i$ for the local action of $\Net^i$.  
For $\kk \in \N^{A^i}$ the only difference between these actions is that $\kk \acts \qq$ may produce some letters in $A-A^i$ in addition to the letters in $A^i$ produced by $\kk \acts^i \qq$.
 Hence, writing $P_\qq^i$ for the production matrix of $\Net_\qq^i$, if $\kk \in K \cap \N^{A^i}$ then $\kk \acts \qq = \yy.\qq$, where $\yy = P_\qq^i \kk + \zz$ for some $\zz \in \N^{A-A^i}$.  
Letting $\rr = (e_v\qq_v)_{v \in V}$, we have $\kk \acts \rr = \yy.\rr$ by Lemma~\ref{l.manyamplifiers}. Since $\rr$ is locally recurrent for~$\Net$, it follows that $\yy = P\kk$.  Writing $\rho_i$ for the projection $\N^A \to \N^{A^i}$, we conclude that 
	\[ P_\qq^i \kk = \rho_i (P\kk) = P_{ii} \kk \]
and hence $P_\qq^i = P_{ii}$.

(iii) By part (ii), the production graph of $\Net_\qq^i$ is the strong component $A^i$ of $\Gamma$, so $\Net_\qq^i$ has only one strong component.

(iv) By parts (i) and (ii), $K^i_\qq$ and $P^i_\qq$ do not depend on $\qq$, so all local components of $\Net^i$ are homotopic.
\end{proof}

The strong components are partially ordered by the accessibility relation $\dashrightarrow$.  If we label them so that $A^i \dashrightarrow A^j$ implies $i \geq j$, then the production matrix is block triangular
	\[ P = \left( \begin{array}{ccccc} 
	P_{11} & P_{12} & \cdots & P_{1s} \\
	0 & P_{22} & \cdots & P_{2s} \\
	\vdots & \vdots & \ddots & \vdots \\
	0 & 0 & \cdots & P_{ss}
	\end{array} \right) \]
and the diagonal block $P_{ii}$ is the production matrix of $\Net^i$.  



\section{Halting criterion}
\label{s.halting}

Let $\Net$ be an abelian network with total state space $Q$, and fix a state $\qq \in Q$.
If case (2) of Lemma~\ref{l.dichotomy} holds for all inputs $\xx \in \N^A$, then we say that $\Net$ \emph{halts on all inputs to initial state $\qq$}.  If this is the case for all $\qq \in Q$, then we say that $\Net$ \emph{halts on all inputs}.
The main result of this section is Theorem~\ref{t.halting}, which gives an  efficient way to decide whether a finite abelian network halts on all inputs to a given initial state.

If $\xx \leq \yy$, then any legal execution for $\xx.\qq$ is a legal execution for $\yy.\qq$.  It follows that halting is a monotone property: 
\begin{equation}
\label{e.haltingismonotone}
\mbox{If $\Net$ halts on input $\yy.\qq$, then $\Net$ halts on all inputs $\xx.\qq$ for $\xx \leq \yy$.}
\end{equation}

Recall the equivalence relation $\sim$ on $Q$ introduced in \textsection\ref{s.localcomp}.

\begin{lemma}
\label{l.haltingequiv}
If $\qq_1 \sim \qq_2$, then $\Net$ halts on all inputs to initial state $\qq_1$ if and only if $\Net$ halts on all inputs to initial state $\qq_2$.
\end{lemma}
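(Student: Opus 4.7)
The plan is to translate $\qq_1 \sim \qq_2$ into the existence of a common intermediate state $\rr$ reachable from each of $\qq_1, \qq_2$ by an explicit input in $\N^A$, and then shuttle halting across $\rr$ using the dichotomy (Lemma~\ref{l.dichotomy}) and monotonicity \eqref{e.haltingismonotone}.

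By the componentwise definition of $\sim$ on $Q$ and Lemma~\ref{l.chainshortening} applied to each local action $M_v \times Q_v \to Q_v$, there exist $\mm, \mm' \in \prod_{v \in V} M_v$ with $\mm \qq_1 = \mm' \qq_2 =: \rr$. Surjectivity of $t : \N^A \to \prod_v M_v$ (Section~\ref{s.production}) lifts $\mm, \mm'$ to $\yy_1, \yy_2 \in \N^A$ satisfying $\yy_i \acts \qq_i = \ww_i.\rr$ for some $\ww_i \in \N^A$.

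Assume $\Net$ halts on all inputs to $\qq_1$; I fix arbitrary $\xx \in \N^A$ and establish halting on $\xx.\qq_2$ through the chain
\[ (\xx + \ww_2 + \yy_1).\qq_1 \;\to\; (\xx + \ww_2 + \ww_1).\rr \;\to\; (\xx + \ww_2).\rr \;\to\; (\xx + \yy_2).\qq_2 \;\to\; \xx.\qq_2, \]
where each arrow abbreviates ``halting at the source implies halting at the target.'' Arrows~1 and~3 are prefix-shuttles in opposite directions, built from Lemma~\ref{l.piprops}(ii) together with Lemma~\ref{l.dichotomy}: for arrow~1, processing $\yy_1$ from the source is a legal prefix that lands in the target, and since the source halts, Lemma~\ref{l.dichotomy} forces the tail from the target to be a finite complete legal execution; for arrow~3, from $(\xx+\yy_2).\qq_2$ processing $\yy_2$ is a legal prefix landing in $(\xx+\ww_2).\rr$, so any finite complete execution from the latter extends backward to one from the former. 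Arrows~2 and~4 are monotonicity \eqref{e.haltingismonotone}. Interchanging $\qq_1$ and $\qq_2$ gives the reverse implication.

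I do not foresee a real obstacle; the one point worth explicit attention is verifying that $\yy_1$ and $\yy_2$ really are legal prefixes of the augmented inputs, but this is immediate because in each case the letters to be processed appear with nonnegative multiplicity in the input vector, and processing one letter decrements only its own coordinate.
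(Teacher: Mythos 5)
Your proposal is correct and follows essentially the same route as the paper: reduce $\qq_1\sim\qq_2$ to a common state $\rr=t(\yy_1)\qq_1=t(\yy_2)\qq_2$ via surjectivity of $t$, then transfer halting using Lemma~\ref{l.piprops}(ii) together with the dichotomy (halting is preserved along legally reachable configurations, in both directions) and monotonicity \eqref{e.haltingismonotone}. The paper packages the reachability step slightly more compactly (both $(\xx+\yy).\qq$ and $(\xx+\zz).\rr$ reach the single configuration $(\xx+\yy)\acts\qq$), but your four-arrow chain uses the same ingredients and is sound.
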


\begin{proof}
If $\qq_1 \sim \qq_2$ then $t(\yy_1)\qq_1 = t(\yy_2)\qq_2$ for some $\yy_1,\yy_2 \in \N^A$. Thus it suffices to show for all $\yy \in \N^A$ and all $\qq \in Q$ that 
$\Net$ halts on all inputs to initial state $\qq$ if and only if $\Net$ halts on all inputs to initial state $t(\yy)\qq$.

Let $\yy \acts \qq = \zz.\rr$. Then $\rr=t(\yy)\qq$.  For any $\xx \in \N^A$, we have
  \[ (\xx + \yy) \acts \qq = \xx \acts (\yy \acts \qq) = \xx \acts (\zz.\rr) \]
  so both $(\xx+\yy).\qq$ and $(\xx+\zz).\rr$ have legal executions resulting in $(\xx+\yy) \acts \qq$.
 Thus $\Net$ halts on input $(\xx+\yy).\qq$ if and only if $\Net$ halts on input $(\xx + \zz).\rr$.  
By monotonicity \eqref{e.haltingismonotone}, it follows that $\Net$ halts on all inputs to initial state $\qq$ if and only if $\Net$ halts on all inputs to initial state $\rr$. 
 \end{proof}

\begin{definition} 
\label{d.amp}
A state $\xx.\qq$ is an \emph{amplifier} if $\xx \in \N^A$ and there exists a nonempty legal execution $w$ from $\xx.\qq$ such that $\pi_w (\xx.\qq) = \yy.\qq$ for some $\yy \geq \xx$.
\end{definition}

\begin{definition} 
\label{d.strongamp}
A state $\xx.\qq$ is a \emph{strong amplifier} if $\xx \in \N^A - \{\mathbf{0}\}$ and $\xx \acts \qq = \yy.\qq$ for some $\yy \geq \xx$.
\end{definition}

In words, a strong amplifier is a pair $\xx.\qq$ with the property that after processing all letters once, the network has returned to the same state $\qq$ with at least as many letters of each type as before.

\begin{example}
In the sandpile network $\Sand(G)$ of an undirected graph with no sink, let $\xx = (d_v)_{v \in V}$ be the configuration where each vertex has the same number of letters (``chips'') as its degree.  For any initial state $\qq$, processing all letters once causes each vertex to topple once, so that each vertex $v$ receives one letter from each of its $d_v$ neighbors.  Hence $\xx \acts \qq = \xx.\qq$, and $\xx.\qq$ is a strong amplifier.
\end{example}

\begin{lemma}
\label{l.amp}
The following are equivalent for a finite abelian network $\Net$.
\begin{enumerate}
\item $\Net$ has an amplifier.
\item $\Net$ has a strong amplifier.
\item $\Net$ fails to halt on some input.
\end{enumerate}
\end{lemma}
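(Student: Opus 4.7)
My plan is to close the loop of implications $(2) \Rightarrow (3) \Rightarrow (1) \Rightarrow (2)$. The upgrade $(1) \Rightarrow (2)$ is the only real subtlety: a strong amplifier requires that the ``process each letter once'' action $\acts$ return the network to the same internal state, which is a priori more restrictive than admitting some amplifying legal execution. The key observation is that if $w$ is the amplifying execution, then taking the strong amplifier's input to be the letter-count vector $\kk = |w|$ rather than the original $\xx$ suffices; the conversion goes through Lemma~\ref{l.piprops}(ii), and the same additivity of $\pi$ drives all three implications.

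For $(2) \Rightarrow (3)$, I would start from a strong amplifier $\xx.\qq$ with $\xx \acts \qq = \yy.\qq$, $\yy \geq \xx \neq \zero$, and first exhibit a legal word $w$ with $|w| = \xx$ (for instance, processing first all $\xx_{a_1}$ copies of $a_1$, then all $\xx_{a_2}$ copies of $a_2$, and so on; the count of the letter currently being processed never dips below $1$ because message passing only increases counts of other letters). Then $\pi_w(\xx.\qq) = \yy.\qq$. By Lemma~\ref{l.piprops}(ii) applied with spectator vector $\yy - \xx \geq \zero$, the same word $w$ is legal for $\yy.\qq$ and takes it to $(2\yy - \xx).\qq$, with $2\yy - \xx \geq \yy$. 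Iterating gives a legal execution $www\cdots$ of unbounded length, so by the halting dichotomy (Lemma~\ref{l.dichotomy}) no finite complete execution from $\xx.\qq$ can exist.

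For $(3) \Rightarrow (1)$, I would fix an input on which $\Net$ fails to halt and apply the halting dichotomy to obtain an infinite legal execution $a_1 a_2 \cdots$ with states $\xx^i.\qq^i$; legality of each prefix keeps $\xx^i \in \N^A$. Finiteness of $Q$ lets me extract a subsequence of times $i_1 < i_2 < \cdots$ at which the internal state equals a common $\qq^* \in Q$. Now Dickson's Lemma (Lemma~\ref{l.dickson}) applied to $\xx^{i_1}, \xx^{i_2}, \ldots \in \N^A$ produces indices $j < k$ with $\xx^{i_j} \leq \xx^{i_k}$, and the subword $a_{i_j+1} \cdots a_{i_k}$ is a nonempty legal execution from $\xx^{i_j}.\qq^*$ to $\xx^{i_k}.\qq^*$, witnessing that $\xx^{i_j}.\qq^*$ is an amplifier.

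Finally, for $(1) \Rightarrow (2)$, let $w$ witness that $\xx.\qq$ is an amplifier, with $\pi_w(\xx.\qq) = \yy.\qq$ and $\yy \geq \xx$, and set $\kk = |w| \in \N^A \setminus \{\zero\}$. Applying Lemma~\ref{l.piprops}(ii) with the (possibly signed) spectator $\kk - \xx \in \Z^A$ gives $\kk \acts \qq = \pi_\kk(\kk.\qq) = (\kk + (\yy - \xx)).\qq$, and $\kk + (\yy - \xx) \geq \kk$, so $\kk.\qq$ is a strong amplifier. I expect the $(3) \Rightarrow (1)$ step to be the main obstacle, since it is the only place where finiteness of $\Net$ is genuinely used (finite $Q$ via pigeonhole and finite $A$ via Dickson); the other two implications are essentially one-line manipulations of Lemma~\ref{l.piprops}.
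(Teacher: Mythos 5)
Your proof is correct and follows essentially the same route as the paper's: each implication is driven by Lemma~\ref{l.piprops}(ii), with pigeonhole on the finite state space $Q$ plus Dickson's Lemma handling $(3) \Rightarrow (1)$, and the only difference is the (immaterial) orientation of the cycle of implications. Your extra remark justifying why a word $w$ with $|w|=\xx$ is automatically legal for $\xx.\qq$ is a small point the paper leaves implicit, and it is stated correctly.
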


\begin{proof}
If $\Net$ has an amplifier $\xx.\qq$, then there is a legal execution $\pi_\uu(\xx.\qq) = \yy.\qq$ for some $\yy \geq \xx$ and $\uu \in \N^A -\{\zero\}$.  Then $\uu \acts \qq = \pi_\uu(\uu.\qq) = (\uu + \yy-\xx).\qq$, where the second equality follows from Lemma~\ref{l.piprops}(ii). Therefore $\uu.\qq$ is a strong amplifier, which shows that (1) $\Rightarrow$ (2).

Next if $\Net$ has a strong amplifier $\uu.\qq$, then there is a legal execution $w$ 
with $|w|=\uu \in \N^A - \{\zero\}$
starting with $\uu.\qq$ and ending with $(\uu+\vv).\qq$ for some $\vv \geq \zero$.  
Then for any $n \geq 0$ the same $w$ is a legal execution starting with $(\uu + n\vv).\qq$ and ending with $(\uu + (n+1)\vv).\qq$.  Hence $w^n$ is a legal execution starting from $\uu.\qq$ for all $n \geq 0$.  Since there exist arbitrarily long legal executions, we conclude from Lemma~\ref{l.dichotomy} that $\Net$ does not halt on input $\uu.\qq$, which shows that (2) $\Rightarrow$ (3).

Lastly, suppose that $\Net$ fails to halt on some input $\yy.\qq$. 
  Then there is an infinite word $w_1 w_2 \cdots$ such that $w_1 \cdots w_n$ is a legal execution for all $n \geq 1$. Let $\yy_n.\qq_n = \pi_{w_1 \cdots w_n} (\yy.\qq)$.   Since the total state space $Q$ is finite, there exists $\qq \in Q$ such that $\qq_n=\qq$ for infinitely many~$n$.  By Dickson's Lemma~\ref{l.dickson}, there exist indices $j < k$ such that $\qq_j = \qq_k = \qq$ and $\yy_j \leq \yy_k$.  This $\yy_j.\qq_j$ is an amplifier, which shows that (3) $\Rightarrow$ (1).
\end{proof}

The next lemma is a variant of Lemma~\ref{l.amp} with distinguished initial state. Recall from \textsection\ref{s.localcomp} the local component $\Net_\qq$ containing state $\qq$.

\begin{lemma}
\label{l.ampwithq}
The following are equivalent for a finite abelian network $\Net$ and state~$\qq$.
\begin{enumerate}
\item $\Net_\qq$ has an amplifier.
\item $\Net_\qq$ has a strong amplifier.
\item $\Net$ fails to halt on some input to initial state $\qq$.
\end{enumerate}
\end{lemma}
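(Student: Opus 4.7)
The plan is to reduce to Lemma~\ref{l.amp} applied to the local component $\Net_\qq$, using Lemma~\ref{l.haltingequiv} to transfer halting between $\sim$-equivalent initial states. The key observation is that $\Net_\qq$ is itself a finite abelian network on the same underlying graph and alphabet as $\Net$, whose state space is the irreducible component containing $\qq$. Since each irreducible component $Q_v^{\alpha_v}$ of $\Proc_v$ is closed under the local monoid $M_v$, the total state space of $\Net_\qq$ is closed under every $\pi_a$. Hence executions that start in $\Net_\qq$ stay in $\Net_\qq$, and the notions of ``legal'' and ``complete'' for $\Net_\qq$ coincide with those for $\Net$ on pairs $\xx.\rr$ with $\rr \sim \qq$.

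With this in place, (1)$\Leftrightarrow$(2) is immediate from (1)$\Leftrightarrow$(2) of Lemma~\ref{l.amp} applied to the finite abelian network $\Net_\qq$. For (2)$\Rightarrow$(3), if $\uu.\rr$ is a strong amplifier of $\Net_\qq$ then $\rr \sim \qq$ and $\uu.\rr$ is equally a strong amplifier of $\Net$; by the (2)$\Rightarrow$(3) direction of Lemma~\ref{l.amp}, $\Net$ fails to halt on input $\uu.\rr$, and Lemma~\ref{l.haltingequiv} transfers this to the existence of an input with initial state $\qq$ on which $\Net$ fails to halt.

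For (3)$\Rightarrow$(1), I would follow the Dickson's Lemma argument in the proof of Lemma~\ref{l.amp}, while tracking that all intermediate states remain in $\Net_\qq$. Assume $\Net$ fails to halt on some $\yy.\qq$; extract an infinite word $w_1 w_2 \cdots$ such that each prefix $w_1 \cdots w_n$ is legal for $\yy.\qq$, and set $\yy_n.\qq_n := \pi_{w_1 \cdots w_n}(\yy.\qq)$. Then $\qq_n = t(|w_1\cdots w_n|)\qq \sim \qq$, so every $\qq_n$ lies in the (finite) state space of $\Net_\qq$. Thus some $\qq' \sim \qq$ satisfies $\qq_n = \qq'$ for infinitely many~$n$; applying Dickson's Lemma~\ref{l.dickson} to the subsequence $(\yy_n)$ indexed by those $n$ yields $j<k$ with $\qq_j=\qq_k=\qq'$ and $\yy_j\leq \yy_k$. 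Then $w_{j+1}\cdots w_k$ is a nonempty legal execution in $\Net_\qq$ witnessing that $\yy_j.\qq'$ is an amplifier of $\Net_\qq$.

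The only thing to watch is the bookkeeping: each of the three conditions genuinely refers to $\Net_\qq$, so one must check in each implication that the relevant amplifier, strong amplifier, or nonhalting execution really lives in the smaller network. This is routine given the invariance of $\Net_\qq$'s state space under the $\pi_a$ and Lemma~\ref{l.haltingequiv}, and so I do not expect a serious obstacle beyond this translation.
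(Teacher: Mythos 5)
Your proof is correct and takes essentially the same approach as the paper: reduce to Lemma~\ref{l.amp} applied to $\Net_\qq$ and transfer initial states via Lemma~\ref{l.haltingequiv}. The only inefficiency is your (3)$\Rightarrow$(1) step, where you re-run the Dickson's-lemma argument inside $\Net_\qq$; the paper instead notes that executions in $\Net$ from a state of $\Net_\qq$ coincide with executions in $\Net_\qq$, so (3) is equivalent to ``$\Net_\qq$ fails to halt on some input to initial state $\qq$,'' which by Lemma~\ref{l.haltingequiv} is equivalent to ``$\Net_\qq$ fails to halt on some input'' --- giving all three equivalences directly from Lemma~\ref{l.amp}.
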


\begin{proof} 
By Lemma~\ref{l.amp} it suffices to show that (3) is equivalent to the statement that $\Net_\qq$ fails to halt on some input.
Any execution for $\xx.\qq$ in $\Net$ is also an execution for $\xx.\qq$ in $\Net_\qq$, so (3) is equivalent to the statement that $\Net_\qq$ fails to halt on some input to initial state $\qq$.  By Lemma~\ref{l.haltingequiv}, if $\Net_\qq$ fails to halt on some input, then $\Net_\qq$ fails to halt on some input to initial state $\qq$, which completes the proof.
\end{proof}


\begin{theorem}
\label{t.halting}
\moniker{Halting Criterion 1}
A finite abelian network $\Net$ halts on all inputs to initial state $\qq$ if and only if every eigenvalue of the production matrix $P_\qq$ has absolute value strictly less than~$1$.
\end{theorem}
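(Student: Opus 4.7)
The plan is to recast halting as a spectral condition on $P_\qq$, by two reductions followed by a Perron--Frobenius argument.

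First, by Lemma~\ref{l.ampwithq}, $\Net$ halts on all inputs to initial state $\qq$ if and only if $\Net_\qq$ has no strong amplifier. I will show that $\Net_\qq$ has a strong amplifier if and only if there exists $\xx \in \N^A \setminus \{\zero\}$ with $P_\qq \xx \geq \xx$ coordinatewise. For the forward direction, take a strong amplifier $\xx.\rr$ in $\Net_\qq$ with $\xx \acts \rr = \yy.\rr$ and $\yy \geq \xx$. Iterating Lemma~\ref{l.localaction} together with Lemma~\ref{l.piprops}(ii) gives $(n\xx) \acts \rr = n\yy.\rr$ for every $n \geq 1$. Choosing $n = N := [\Z^A : K]$, which is finite by Lemma~\ref{l.fullrank}, yields $N\xx \in K$, and Lemma~\ref{l.manyamplifiers} lifts the relation to the locally recurrent state $\widehat\rr$: $(N\xx) \acts \widehat\rr = N\yy.\widehat\rr$, so $P_{\widehat\rr}(N\xx) = N\yy \geq N\xx$. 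Since $\rr \sim \qq$, Lemma~\ref{l.indepofq}(ii) gives $P_\qq = P_{\widehat\rr}$, yielding the required inequality. Conversely, given $\xx \in \N^A \setminus \{\zero\}$ with $P_\qq \xx \geq \xx$, the scaled vector $N\xx \in K$ satisfies $(N\xx) \acts \widehat\qq = P_\qq(N\xx).\widehat\qq \geq N\xx.\widehat\qq$, exhibiting $N\xx.\widehat\qq$ as a strong amplifier.

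Second, I need to show that no $\xx \in \N^A \setminus \{\zero\}$ satisfies $P_\qq \xx \geq \xx$ if and only if every eigenvalue of $P_\qq$ has absolute value strictly less than $1$. The ``$\Leftarrow$'' direction is immediate: if the spectral radius $\lambda(P_\qq) < 1$ then $P_\qq^n \to 0$, whereas $P_\qq \xx \geq \xx$ combined with $P_\qq \geq 0$ and $\xx \geq \zero$ inductively forces $P_\qq^n \xx \geq \xx$ for all $n$, so $\xx = \zero$. For ``$\Rightarrow$'', assume $\lambda := \lambda(P_\qq) \geq 1$. By Lemma~\ref{l.perronfrobenius}(i) there is a real nonnegative eigenvector $\vv \neq \zero$ with $P_\qq \vv = \lambda \vv \geq \vv$. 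The bounded convex polytope
\[
C = \bigl\{ \xx \in \R^A : \xx \geq \zero,\ P_\qq \xx \geq \xx,\ \textstyle\sum_a \xx_a = 1 \bigr\}
\]
is nonempty (it contains a positive scalar multiple of $\vv$) and is cut out by rational linear inequalities, so it has a rational vertex; clearing denominators produces the required integer witness in $\N^A \setminus \{\zero\}$.

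The main obstacle is this last implication: the Perron eigenvalue and eigenvector of a rational matrix need not themselves be rational, so Perron--Frobenius alone (even part~(ii) of Lemma~\ref{l.perronfrobenius}) cannot directly supply an integer amplifier. The rational polytope argument bypasses the difficulty: rationality of the defining inequalities is enough to guarantee a rational witness whenever a real witness exists, and rationality then scales to integrality.
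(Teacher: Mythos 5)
Your proof is correct, and it follows the same architecture as the paper's: reduce via Lemma~\ref{l.ampwithq} to the existence of a strong amplifier in $\Net_\qq$, translate that (using Lemmas~\ref{l.localaction}, \ref{l.piprops}, \ref{l.manyamplifiers}, \ref{l.fullrank}, \ref{l.indepofq}) into the existence of some $\xx \in \N^A\setminus\{\zero\}$ with $P_\qq\xx \geq \xx$, and then compare with the Perron--Frobenius eigenvalue. The one genuine divergence is the step that manufactures a \emph{rational} witness when $\lambda \geq 1$. You flag that the Perron eigenvector $\vv$ may be irrational and handle it with a rational-polytope argument: the set $\left\{\xx \geq \zero,\ P_\qq\xx \geq \xx,\ \sum_a \xx_a = 1\right\}$ is a nonempty bounded polytope cut out by rational inequalities, hence contains a rational point, which you then scale to an integer vector. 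The paper instead splits on $\lambda$: when $\lambda = 1$ it invokes Lemma~\ref{l.perronfrobenius}(ii) to take the eigenvector integral (possible since $\lambda$ is rational), and when $\lambda > 1$ it picks any rational $\yy$ in the coordinatewise interval $\xx \leq \yy \leq \lambda\xx$, which has room for one because $\lambda > 1$; then $P_\qq\yy \geq P_\qq\xx = \lambda\xx \geq \yy$. Both devices work. Yours is arguably cleaner since it avoids both the case split and part (ii) of the Perron--Frobenius lemma; the paper's is more elementary in that it needs only one-dimensional interval interpolation rather than polytope facts. A minor additional difference: you spell out why $P_\qq\xx \geq \xx$ with $\xx\in\N^A\setminus\{\zero\}$ forces $\lambda \geq 1$ (iterate and use $P_\qq^n \to 0$ if $\lambda < 1$), while the paper states this implication without proof, treating it as standard nonnegative-matrix theory.
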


\begin{proof}
Let $P=P_\qq$, and let $\xx$ and $\lambda$ be the Perron-Frobenius eigenvector and eigenvalue of $P$; if $\lambda=1$ then we may take $\xx$ to have integer entries (Lemma~\ref{l.perronfrobenius}).  By Lemma~\ref{l.ampwithq} it suffices to show that (1) if $\lambda \geq 1$, then $\Net_\qq$ has an amplifier; and (2) if $\Net_\qq$ has a strong amplifier, then $\lambda \geq 1$.

(1)  If $\lambda \geq 1$, then there is a vector $\yy \in \Q^A$ such that $\xx \leq \yy \leq \lambda \xx$.  Then $P \yy \geq P \xx = \lambda \xx \geq \yy$. Choosing $n \geq 1$ such that $n \yy \in K$, we have $n\yy \acts \qq = P(n \yy).\qq$, so $n \yy.\qq$ is an amplifier.

(2) If $\Net_\qq$ has a strong amplifier $\yy.\qq'$, then by Lemma~\ref{l.manyamplifiers}, $\yy.\mm \qq'$ is also a strong amplifier for any $\mm \in \prod_{v \in V} M_v$.  In particular, taking $\mm=(e_v)_{v \in V}$ we obtain a strong amplifier $\yy.\rr$ with $\rr$ locally recurrent and $\rr \sim \qq' \sim \qq$.
By Lemma~\ref{l.indepofq}, $P_\qq (\yy) = P_{\rr}(\yy) \geq \yy$, which shows that $\lambda \geq 1$.
\end{proof}

\begin{remark}
In the case that $\Net$ is locally irreducible, the production matrix $P_\qq$ does not depend on $\qq$ by Lemma~\ref{l.indepofq}.  In this case Theorem~\ref{t.halting} gives a criterion for $\Net$ to halt on all inputs  regardless of initial state.
\end{remark}
 
If $P_\qq$ has Perron-Frobenius eigenvalue $\lambda \geq 1$, then $\Net$ runs forever on some inputs to initial state $\qq$. 
In the next section we examine how to tell whether this is the case for a given input.

\section{Laplacian matrix; Sandpilization}
\label{s.laplacian}

An abelian network $\Net$ is called \emph{locally finite} if the state space and alphabet of each processor are finite.
Let $\Net$ be a locally finite and locally irreducible abelian network with total alphabet $A = \sqcup A_v$ and total kernel $K = \prod K_v$.
Then each $K_v$ has finite index in $\Z^{A_v}$.  For each letter $a \in A_v$, let $r_a$ be the smallest positive integer such that $r_a 1_a \in K_v$.  

Denote by $D$ the $A \times A$ diagonal matrix with diagonal entries $r_a$, and by $I$ the $A \times A$ idenity matrix. Let $P$ be the production matrix of $\Net$, which is well defined by Lemma~\ref{l.indepofq}.

\begin{definition}
\label{d.laplacian}
The \emph{Laplacian} of $\Net$ is the $A \times A$ matrix
	\[ L = (I-P)D. \]
\end{definition}

\begin{lemma}
$L$ has integer entries.
\end{lemma}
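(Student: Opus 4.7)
The plan is to show that $PD$ has integer entries; since $D$ already has integer entries, it will follow immediately that $L = (I-P)D = D - PD$ does too.

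The key input is Lemma~\ref{l.extends}, which tells us that $P = P_\qq$ (for any locally recurrent $\qq$) extends to a group homomorphism $K \to \Z^A$. In other words, $P$ sends lattice vectors in the total kernel to integer vectors in $\Z^A$. So the main step is just to observe that the columns of $PD$ are exactly the images under $P$ of vectors in $K$.

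More concretely, for each letter $a \in A_v$, the $a$-th column of $D$ is $r_a \basis_a$, where $r_a$ was defined to be the smallest positive integer with $r_a \basis_a \in K_v$. In particular $r_a \basis_a \in K_v \subset K = \prod_{v} K_v$. Therefore the $a$-th column of $PD$ equals $P(r_a \basis_a) \in \Z^A$ by Lemma~\ref{l.extends}. Since this holds for each $a \in A$, the matrix $PD$ is integer-valued.

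There is no real obstacle here; the statement is essentially a restatement of the fact that $P$, when restricted to the finite-index sublattice $K \subset \Z^A$, is integer-valued, and $D$ is precisely the diagonal matrix whose columns lie in $K$ by the minimality of the $r_a$.
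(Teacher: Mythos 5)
Your proof is correct and takes essentially the same approach as the paper: both arguments reduce to observing that the columns of $D$ lie in $K$ (by the minimality/definition of the $r_a$), so $PD$ has integer entries because $P$ maps $K$ into $\Z^A$, and then $L = D - PD$ does too. The paper phrases this by directly invoking the definition $(D\xx)\acts\qq = (PD\xx).\qq$ for $\xx \in \N^A$ to conclude $PD\xx \in \N^A$, whereas you cite Lemma~\ref{l.extends}, but these are the same fact.
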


\begin{proof}
If $\xx \in \N^A$ then $D\xx \in K$, so $(D\xx) \acts \qq = (PD\xx).\qq$ by the definition of the production matrix $P$. In particular, $PD\xx \in \N^A$.  It follows that $L\xx \in \N^A$ for all $\xx \in \N^A$, so $L$ has integer entries.
\end{proof}

The class of toppling matrices, defined in \textsection\ref{s.nonneg}, is one of several extensions of the notion of ``positive definite'' to non-symmetric matrices.  The next lemma encapsulates how toppling matrices arise in our setting. 


\begin{lemma}
\label{l.mfrompositive}
Let $L=(I-P)D$, where $P$ is a nonnegative $n\times n$ matrix, $D$ is an $n \times n$ positive diagonal matrix, and $I$ is the $n \times n$ identity matrix.
Then $L$ is a toppling matrix if and only if the Perron-Frobenius eigenvalue of $P$ is strictly less than $1$.
\end{lemma}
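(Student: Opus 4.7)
The plan is to verify the two implications separately, in each case invoking a different one of the equivalent conditions listed in Lemma~\ref{l.posdef}. First I would note the structural observation that is used in both directions: since $D$ is a positive diagonal matrix and $P$ has nonnegative entries, the off-diagonal entries of $L=(I-P)D$ are $-P_{ij}D_{jj}\leq 0$, so $L$ satisfies the sign hypothesis of Lemma~\ref{l.posdef}.

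For the direction $\lambda<1 \Rightarrow L$ is a toppling matrix, I would verify condition (3) of Lemma~\ref{l.posdef} directly via a Perron-Frobenius vector. Let $\yy$ be the Perron-Frobenius eigenvector of $P$ guaranteed by Lemma~\ref{l.perronfrobenius}(i), so $\yy>\zero$ and $P\yy=\lambda\yy$. Set $\xx := D^{-1}\yy$. Since $D^{-1}$ has positive diagonal entries and $\yy>\zero$, we have $\xx>\zero$, and
\[
L\xx = (I-P)D\cdot D^{-1}\yy = (I-P)\yy = (1-\lambda)\yy,
\]
which is $>\zero$ because $1-\lambda>0$ and $\yy>\zero$. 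Hence $L$ is a toppling matrix.

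For the converse, I would use condition (5) of Lemma~\ref{l.posdef}: $L$ is invertible with $L^{-1}\geq 0$ entrywise. Writing $L^{-1}=D^{-1}(I-P)^{-1}$ and using that $D^{-1}$ is a positive diagonal matrix, we conclude that $(I-P)$ is invertible and $(I-P)^{-1}$ also has nonnegative entries. Now suppose for contradiction that $\lambda\geq 1$, and let $\yy$ be a Perron-Frobenius eigenvector with $P\yy=\lambda\yy$, $\yy>\zero$. The case $\lambda=1$ gives $(I-P)\yy=\zero$, contradicting invertibility. If $\lambda>1$, then $(I-P)\yy=(1-\lambda)\yy$ has all coordinates $\leq 0$ and is not the zero vector, so applying the nonnegative matrix $(I-P)^{-1}$ yields a vector with all coordinates $\leq 0$; but that vector is $\yy$ itself, which has at least one strictly positive coordinate. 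This contradiction forces $\lambda<1$.

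The main obstacle is the converse direction: the bare condition (3) of Lemma~\ref{l.posdef} only yields $\lambda\leq 1$ by pairing with a left Perron-Frobenius eigenvector, so one really does need to invoke the stronger condition (5) (nonnegativity of $L^{-1}$) together with a sign-based contradiction to rule out $\lambda=1$ and $\lambda>1$ simultaneously.
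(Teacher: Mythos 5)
Your proposal is correct and follows essentially the same route as the paper: both directions rest on Lemma~\ref{l.posdef} together with a Perron--Frobenius eigenvector, with condition (3) giving the forward implication and nonnegativity of the inverse giving the converse. The only cosmetic difference is that the paper applies $L^{-1}$ directly to $L\yy=(1-\lambda)\xx$ to read off $\lambda<1$ in one step, whereas you factor through $(I-P)^{-1}=DL^{-1}$ and split the cases $\lambda=1$ and $\lambda>1$.
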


\begin{proof}
The off-diagonal entries of $L$ are nonpositive, so Lemma~\ref{l.posdef} applies.  Let $\xx$ and $\lambda$ be a Perron-Frobenius eigenvector and eigenvalue of $P$, and let $\yy = D^{-1} \xx$. Then $\yy > \zero$ and
	\begin{equation} \label{ellwhy} L\yy = (I-P)\xx = (1-\lambda)\xx. \end{equation}
If $\lambda< 1$, then $L\yy >\zero$, so $L$ is a toppling matrix. Conversely, if $L$ is a toppling matrix, then $L^{-1}$ has nonnegative entries.  Applying $L^{-1}$ to \eqref{ellwhy} yields $\yy = (1-\lambda) L^{-1} \xx$. Since $L^{-1}\xx > \zero$ and $\yy > \zero$, we conclude that $\lambda<1$.
\end{proof}

If $\Net$ is any finite abelian network and $\qq \in Q$, let $L_\qq = (I-P_\qq)D_\qq$ be the Laplacian of the local component $\Net_\qq$. Using Lemma~\ref{l.mfrompositive}, we can rephrase Theorem~\ref{t.halting} in terms of the Laplacian as follows.  The case of a toppling network is due to Gabrielov \cite{Gab94}.

\begin{corollary}
\moniker{Halting Criterion 2}
\label{c.positivedefinite}
A finite abelian network $\Net$ halts on all inputs to initial state $\qq$ if and only if $L_\qq$ is a toppling matrix.
\end{corollary}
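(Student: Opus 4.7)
The plan is to chain together three results already established in the paper: Theorem~\ref{t.halting}, part (i) of the Perron-Frobenius Lemma~\ref{l.perronfrobenius}, and Lemma~\ref{l.mfrompositive}. Before invoking them, I would verify the hypotheses needed for Lemma~\ref{l.mfrompositive}: the entries of $P_\qq$ are nonnegative rationals by the construction in \textsection\ref{s.production} (the $(a,b)$ entry is the nonnegative average $\frac{1}{n}p_{ab}(n)$), and $D_\qq$ is a positive diagonal matrix since each $r_a$ is a positive integer by definition. The matrix $I$ is the $A \times A$ identity, matching the shape required.

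First, I would invoke Theorem~\ref{t.halting} to restate the halting-on-all-inputs-to-$\qq$ condition as the spectral condition that every eigenvalue of $P_\qq$ has absolute value strictly less than $1$. Next, I would apply Lemma~\ref{l.perronfrobenius}(i) to the nonnegative matrix $P_\qq$: it guarantees that the Perron-Frobenius eigenvalue $\lambda$ dominates the absolute values of all other eigenvalues. Hence the spectral radius of $P_\qq$ is less than $1$ if and only if $\lambda < 1$.

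Finally, I would cite Lemma~\ref{l.mfrompositive} to convert the statement $\lambda < 1$ into the statement that $L_\qq = (I - P_\qq)D_\qq$ is a toppling matrix. Composing the three equivalences yields the corollary. I do not anticipate any real obstacle, since the corollary is essentially a cosmetic repackaging of Theorem~\ref{t.halting} in the $M$-matrix language that matches Gabrielov's formulation for toppling networks; the substantive content has already been carried out in the referenced lemmas.
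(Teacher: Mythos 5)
Your proposal is correct and follows essentially the same route as the paper, which proves the corollary precisely by combining Theorem~\ref{t.halting} with Lemma~\ref{l.mfrompositive} (with Perron--Frobenius supplying the equivalence between the spectral radius condition and $\lambda<1$). Your extra verification that $P_\qq$ is nonnegative and $D_\qq$ is positive diagonal is a harmless elaboration of what the paper leaves implicit.
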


\paragraph{\textbf{Example.}}
Let $\Net$ be a toppling network with three vertices $a,b,c$ and thresholds $r_a=3,\; r_b=4,\; r_c=5$. For the messages passed we have:
\begin{align*}
a \mbox{ topples } &\rightarrow \quad b \mbox{ receives } 2 \mbox{ chips, } c \mbox{ receives } 2 \mbox{ chips.} \\
b \mbox{ topples } &\rightarrow \quad a \mbox{ receives } 1 \mbox{ chip,\ghost{s} } c \mbox{ receives } 2 \mbox{ chips.} \\
c \mbox{ topples } &\rightarrow  \quad a \mbox{ receives } 0 \mbox{ chips, } b \mbox{ receives } 2 \mbox{ chips.}
\end{align*}
Note that no vertex is a sink, and vertex $a$ is productive (it ``creates'' a chip each time it topples). 
The production matrix and Laplacian of this network are:
\[ P=\left( \begin{array}{ccc}
0 & \frac14 & 0 \\
\frac 2 3 & 0& \frac 2 5 \\
\frac 2 3 & \frac 1 2 & 0\end{array} \right),
\qquad
L=\left( \begin{array}{ccc}
3 & -1 & 0 \\
-2 & 4& -2 \\
-2 & -2 & 5 \end{array} \right).\]
Since all principal minors of $L$ are positive, $L$ is a toppling matrix.  Therefore $\Net$ halts on all inputs. $\qed$
\old{
The Smith Normal Form of $L$ is: 
\[ L=\left( \begin{array}{ccc}
1 & 0 & 0 \\
0 & 1& 0 \\
0 & 0 & 34 \end{array} \right)\]
$$\Crit \Net\cong \mathbb{Z}^3/\mathbb{Z}^3L \cong \Z / 34\Z.$$
}
\medskip

To any locally finite, locally irreducible abelian network $\Net$ we associate a unary network $\Sa(\Net)$ called its \emph{sandpilization}.

\begin{figure}[h!]
\centering
\includegraphics[scale=.85]{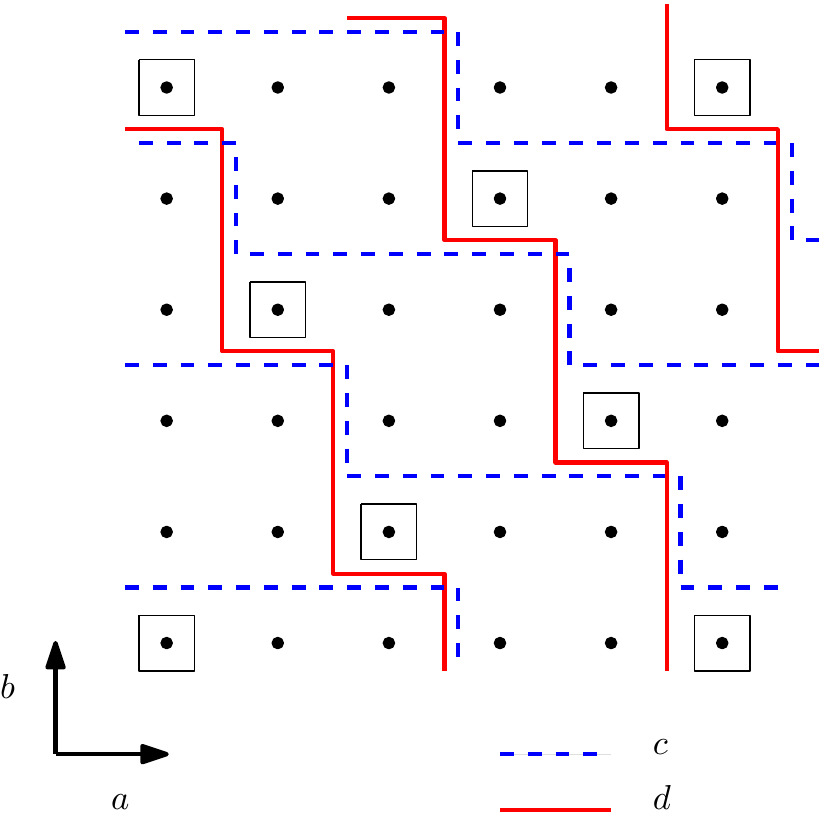}
\includegraphics[scale=.85]{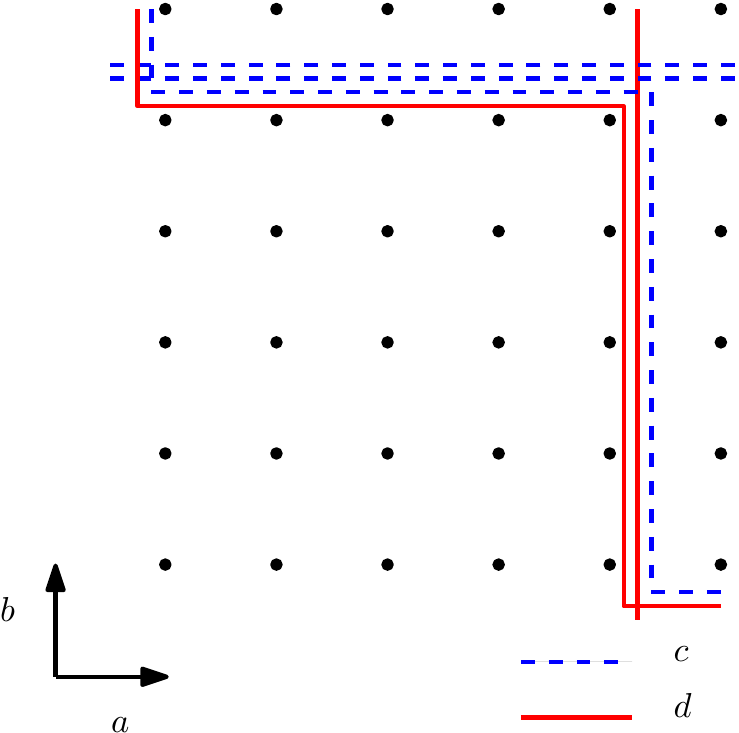}
\caption{Left: State diagram of an abelian processor with $5$ states, input alphabet $\{a,b\}$ and output alphabet $\{c,d\}$; the boxed dots all represent the same state. Right: State diagram of its sandpilization, which has $25$ states.}
\end{figure}

\begin{definition} 
(Sandpilization)
Let $\Net$ be a locally finite and locally irreducible abelian network $\Net$ with Laplacian $L$.
The \emph{sandpilization} $\Sa(\Net)$ of $\Net$ is $\Topp(L)$, the locally recurrent toppling network with Laplacian $L$.
\end{definition}

The underlying graph of $\Sa(\Net)$ is the production graph of $\Net$ (Definition~\ref{d.productiongraph}), which may be larger than the underlying graph of $\Net$.  For each $a \in A$ the processor $\Proc_a$ of $\Sa(\Net)$ has state space $Q_a = \{0,1,\ldots,r_a -1\}$ with transition $t_a(q) = q+1$ (mod $r_a$). It passes no messages except during transitions from state $r_a-1$ to state $0$, when it passes $r_a P_{ba}$ letters $b$ to each processor $\Proc_b$.  
 For example, if $\Net$ is a simple rotor network $\Rotor(G)$, then $\Sa(\Net)$ is the sandpile network $\Sand(G)$.
 
Since $\Net$ and $\Sa(\Net)$ have the same Laplacian, 
the following is immediate from Corollary~\ref{c.positivedefinite}.
 
\begin{corollary}\label{t.haltcondition}
Let $\Net$ be a finite locally irreducible abelian network.  Then $\Net$ halts on all inputs if and only if $\Sa(\Net)$ halts on all inputs. 
\end{corollary}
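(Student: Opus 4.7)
The plan is to reduce both halves of the equivalence to the matrix-theoretic criterion of Corollary~\ref{c.positivedefinite}, exploiting the construction of $\Sa(\Net)$ so that the same Laplacian $L$ governs both networks.

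First I would fix notation: since $\Net$ is locally irreducible and finite, Lemma~\ref{l.indepofq} ensures that the production matrix $P_\qq$, and hence the Laplacian $L_\qq = (I-P_\qq)D$, does not depend on $\qq$; call it $L$. An application of Corollary~\ref{c.positivedefinite} then gives
\[
\Net \text{ halts on all inputs} \iff L \text{ is a toppling matrix}.
\]
The substance of the proof is to show the analogous equivalence for $\Sa(\Net)$ with the same $L$.

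Next I would verify that Corollary~\ref{c.positivedefinite} applies to $\Sa(\Net)$. By definition $\Sa(\Net) = \Topp(L)$ is a unary toppling network on the production graph of $\Net$: each letter $a \in A$ has its own one-letter alphabet $\{a\}$ and state space $Q_a = \Z/r_a\Z$ with transition $t_a(q) = q+1 \pmod{r_a}$. This is finite (since $\Net$ is locally finite) and locally irreducible (each $Q_a$ is a single cycle under its lone generator), so Corollary~\ref{c.positivedefinite} again applies and we just need to identify the Laplacian of $\Sa(\Net)$.

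The key computation is that $\Sa(\Net)$ has Laplacian $L$. The total kernel of $\Sa(\Net)$ at letter $a$ is $r_a \Z$ (one toppling returns the processor to its starting state), so its diagonal matrix of return times is the same $D$ as for $\Net$. By the definition of $\Topp(L)$, when processor $\Proc_a$ performs its $r_a$-th transition it emits $-L_{ba}$ letters $b$ for each $b$; hence processing $r_a$ copies of $a$ from a locally recurrent state outputs $r_a \one_a - L \one_a = (I-P)D\one_a \cdot \tfrac{1}{1}$... more cleanly, from $(r_a \one_a) \acts \qq = \yy.\qq$ one reads off $\yy = r_a \one_a - L \one_a + r_a \one_a$ only where the outgoing messages are counted, giving production matrix $P' = P$ and therefore Laplacian $(I-P)D = L$. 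A second application of Corollary~\ref{c.positivedefinite} then yields
\[
\Sa(\Net) \text{ halts on all inputs} \iff L \text{ is a toppling matrix},
\]
which combined with the first equivalence proves the corollary. The only step requiring any care is the bookkeeping identification of the production matrix of $\Topp(L)$ with $P$; the rest is immediate from the already-established halting criterion.
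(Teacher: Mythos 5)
Your proposal is correct and takes essentially the same route as the paper: since $\Sa(\Net)=\Topp(L)$ is by construction a finite, locally irreducible network with the same return-time matrix $D$ and production matrix $P$, hence the same Laplacian $L$, applying Corollary~\ref{c.positivedefinite} to both $\Net$ and $\Sa(\Net)$ gives the equivalence. (One displayed identity in your key computation is garbled --- processing $r_a$ letters $a$ from a locally recurrent state outputs $PD\one_a = r_a\one_a - L\one_a$, not $(I-P)D\one_a$ --- but the conclusion that the production matrix of $\Sa(\Net)$ is $P$ and its Laplacian is $L$ is correct.)
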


\subsection{Certifying that a network never halts} 

How long must we run an abelian network until we can be sure it will not halt? 
The next lemma shows that any strong amplifier gives an upper bound.  

Recall the least action principle \cite[Lemma~4.3]{part1}, which says that if $w$ is a legal execution for $\xx.\rr$ and $w'$ is a complete execution for $\xx.\rr$, then $|w| \leq |w'|$.  If $\Net$ halts on input $\xx.\rr$, then the \emph{odometer} of $\xx.\rr$ is defined as $[\xx.\rr] = |w|$, where $w$ is any complete legal execution for $\xx.\rr$.

\begin{lemma}
\label{l.nonhalting}
Let $\Net$ be a locally irreducible finite abelian network.   
Suppose that $\alpha.\qq$ is a strong amplifier for $\Net$.  If $\rr$ is locally recurrent and $\Net$ halts on input $\xx.\rr$, then any legal execution $w$ for $\xx.\rr$ satisfies $|w|_a < \alpha_a$ for some $a \in A$.
\end{lemma}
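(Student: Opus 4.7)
The plan is to prove the contrapositive by exploiting the Least Action Principle \cite[Lemma 4.3]{part1}: assuming $\Net$ halts on $\xx.\rr$ with odometer $o = [\xx.\rr]$, I will show that $o_b < \alpha_b$ for some $b \in A$; the claim then follows because any legal $w$ for $\xx.\rr$ satisfies $|w| \leq o$ coordinate-wise, so in particular $|w|_b \leq o_b < \alpha_b$.

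The first step is to transfer the strong-amplifier data from $\qq$ to the locally recurrent state $\rr$. Applying Lemma~\ref{l.manyamplifiers} with $\mm = (e_v)_{v\in V}$ shows that $\alpha.(e\qq)$ is also a strong amplifier and $e\qq$ is locally recurrent, so I may assume $\qq$ itself is locally recurrent. Then $\alpha\acts\qq = \yy.\qq$ forces $t(\alpha)\qq = \qq$, and Lemma~\ref{l.free} (whose hypotheses are met by local irreducibility) yields $\alpha \in K$. By Lemma~\ref{l.indepofq}(i), the production matrix in a locally irreducible network is independent of the initial state, so $P(\alpha) = \yy \geq \alpha$. Since $\alpha \in K$ and $\rr$ is locally recurrent, $t(\alpha)\rr = \rr$, and the production-matrix identity then gives $\alpha \acts \rr = P(\alpha).\rr = \yy.\rr$, so $\alpha.\rr$ is itself a strong amplifier.

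Now assume for contradiction that $o \geq \alpha$ coordinate-wise, and set $\xx_1.\rr_1 := \pi_{o-\alpha}(\xx.\rr)$. Because $\rr$ is locally recurrent and each $M_v$ acts invertibly on $e_vQ_v$ (Lemma~\ref{l.actsinvertibly}), the state $\rr_1 = t(o-\alpha)\rr$ is also locally recurrent, so the previous paragraph gives $\alpha \acts \rr_1 = \yy.\rr_1$. Combining Lemma~\ref{l.piprops}(i) and (ii),
\[ \pi_o(\xx.\rr) = \pi_\alpha(\xx_1.\rr_1) = (\xx_1 + \yy - \alpha).\rr_1. \]
Since $w^*$ is a legal complete execution with $|w^*| = o$, the letter component of $\pi_o(\xx.\rr)$ is $\leq \zero$, forcing $\xx_1 \leq \alpha - \yy \leq \zero$. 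Thus any word $w'$ with $|w'| = o-\alpha$ is a complete execution for $\xx.\rr$ (in the formal sense of \textsection\ref{s.networks}, not necessarily legal), and the Least Action Principle applied to the legal $w^*$ and this complete $w'$ gives $o_a = |w^*|_a \leq |w'|_a = o_a - \alpha_a$ for every $a$, i.e., $\alpha \leq \zero$, contradicting $\alpha \in \N^A \setminus \{\zero\}$.

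The main insight that makes this proof work is to use the formal (not necessarily legal) notion of ``complete execution'' from \textsection\ref{s.networks} rather than attempt to extract a legal sub-execution of $w^*$ with count $o-\alpha$ via some abelian-exchange argument; once the strong amplifier at the locally recurrent state $\rr_1$ identifies $\pi_{o-\alpha}(\xx.\rr)$ as a formal complete execution, the Least Action Principle closes the argument in a single inequality.
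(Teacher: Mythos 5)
Your proof is correct and takes essentially the same route as the paper's: assume the odometer $[\xx.\rr] \geq \alpha$, show that $\pi_{[\xx.\rr]-\alpha}(\xx.\rr)$ has nonpositive letter component so that any word with letter count $[\xx.\rr]-\alpha$ is a (not necessarily legal) complete execution, and contradict the least action principle. The only difference is in how the strong amplifier is transported to the locally recurrent state $t([\xx.\rr]-\alpha)\rr$: the paper writes $\rr = \mm\qq$ via Lemma~\ref{l.recurrent}(1) and applies Lemma~\ref{l.manyamplifiers}, while you deduce $\alpha \in K$ from Lemma~\ref{l.free} and use the production matrix together with Lemma~\ref{l.indepofq}(i); both are valid.
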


\begin{proof}
Suppose for a contradiction that $\Net$ halts on input $\xx.\rr$ but $[\xx.\rr] \geq \alpha$. Write $\uu = [\xx.\rr] = \alpha + \vv$ for some $\vv \geq \zero$. Write $\pi_{\vv}(\xx.\rr) = \yy.\ss$ where $\ss = t(\vv)\rr$. We will show that $\yy \leq \zero$, so there is a complete execution $w'$ for $\xx.\rr$ with $|w'|=\vv$. However, by the definition of the odometer there is also a legal execution $w$ for $\xx.\rr$ with $|w|=\uu$. This contradicts the least action principle since $\uu-\vv = \alpha \in \N^A - \{\zero\}$.

Since $\alpha.\qq$ is a strong amplifier we have $\alpha \acts \qq = \beta.\qq$ for some $\beta \geq \alpha$.
Now by Lemma~\ref{l.recurrent}(1) (which applies to each local action of $M_v$ on $Q_v$, as $\Net$ is 
finite and locally irreducible) since $\rr$ is locally recurrent we have $\rr = \mm\qq$ for some $\mm \in \prod M_v$.  So $\ss = (t(\vv)\mm)\qq$, and by Lemma~\ref{l.manyamplifiers} it follows that $\alpha \acts \ss = \beta.\ss$. By the definition of the odometer, no messages remain after executing $w$, so $\pi_\uu(\xx.\rr) = \zero.\ss'$ for some state $\ss'$.
Using $\pi_{\uu} = \pi_\alpha \circ \pi_\vv$, we have
		\[ \zero.\ss' 
		= \pi_\alpha(\yy.\ss) = (\yy+\beta-\alpha).\ss \]
where the last equality uses Lemma~\ref{l.piprops}(ii).
Hence $\yy = \alpha - \beta \leq \zero$ as desired.
\end{proof} 

In the special case of a locally recurrent toppling network $\Topp(L)$, if $\yy \in \N^V - \{\zero\}$ and $L\yy \leq \zero$ then $(D\yy).\zero$ is a strong amplifier. In this case Lemma~\ref{l.nonhalting} implies that if for a particular input $\xx.\rr$ there is a legal execution in which each vertex $v$ processes at least $L_{vv} \yy_v$ letters, then $\Topp(L)$ does not halt on input $\xx.\rr$. In particular, we recover the criterion of Bj\"orner and Lov\'{a}sz \cite[Prop.\ 4.4]{BL92}: for the sandpile network $\Sand(G)$ on a directed graph $G$, if $\yy\in \N^V-\{\zero\}$ and $L\yy=\zero$ and each vertex $v$ topples at least $\yy_v$ times, then toppling persists forever. When $G$ is undirected (or Eulerian directed) we can take $\yy=\one$ and we recover the criterion of Tardos \cite[Lemma 4]{Tar88}: if each vertex topples at least once, then toppling persists forever.  

\section{Concluding Remarks}
\label{s.concluding}

We indicate here a few directions for further research on abelian networks.

\silentsubsec{Halting problem for a given input}
Theorem~\ref{t.halting} gives a polynomial time algorithm to check whether a finite abelian network $\Net$ halts on all inputs. Under what conditions is there an efficient algorithm to check whether $\Net$ halts on a \emph{given} input $\xx_0.\qq_0$?

An inefficient algorithm runs as follows.  Let $\xx_n.\qq_n = \xx_{n-1} \acts \qq_{n-1}$ for $n \geq 1$.  By Dickson's Lemma~\ref{l.dickson} there exist $m<n$ such that $\qq_m=\qq_n$ and $\xx_m \leq \xx_n$. Each time we generate a new state $\xx_n.\qq_n$, exhaustively check for such an $m<n$. When we find one, if $\xx_n = \zero$ then $\Net$ has already halted, and if $\xx_n \neq \zero$ then $\Net$ will never halt. 

Bounds obtained from Dickson's Lemma grow very quickly \cite{FFSS11}. Lemma~\ref{l.nonhalting} suggests a possibly more efficient approach.
Given a strong amplifier $\alpha.\qq$, what is a bound for the time it takes for $\Net$ either to halt or be certified by Lemma~\ref{l.nonhalting} to run forever?  With such a bound in hand, there remains the question of which abelian networks have small (i.e. polynomial in the size of description of $\Net$) strong amplifiers.
 
 \silentsubsec{Infinite abelian networks}
Questions about the recurrence or transience of rotor walk \cite{LL09,AH11a,AH11b,FGLP14,FLP14} and the explosiveness of sandpiles \cite{FLP10} are cases of the halting problem for spatially infinite abelian networks.  In this setting, ``halting'' means that \emph{each processor} processes only finitely many letters, even though the total number of letters processed may be infinite. Generalizing the least action principle to infinite executions, along the lines of \cite{FMR09}, may be useful in approaching these and related questions.  

Among the many hard questions in this area, let us single out one.  Suppose $\Net$ is a toppling network
whose underlying graph is the square grid $\Z^2$.  Let $\qq$ be an initial state, and suppose that $\Net$ and $\qq$ are periodic in the sense that there is a full rank sublattice $\Lambda \subset \Z^2$ such that $\Proc_v$ and $\qq_v$ depend only on $v+\Lambda$.  Given the finite data of $\Proc_v$ and $\qq_v$ on a fundamental domain, is it decidable whether there exists an input $\xx$ with finite support such that $\Net$ does not halt on $\xx.\qq$? Cairns \cite{Hannah}, using sandpile circuits designed by Moore and Nilsson \cite{MN99}, has shown that the analogous problem in $\Z^3$ is undecidable. 

\silentsubsec{Homotopy via embedding}
Does Lemma~\ref{l.subcomponents} have a converse? Given irreducible, strongly connected abelian networks $\Net_1$ and $\Net_2$ with $\Net_1 \approx \Net_2$ is there an irreducible $\Net$ such that $\Net_1$ and $\Net_2$ are local components of a strong component of $\Net$?

\silentsubsec{Abelian networks with coefficients}

We can define an abelian network purely in terms of monoids and without any reference to automata.  The free commutative monoid $\N^A$ played an important role in our theory. In particular, we used heavily the fact that $\N^A$ is cancellative (for instance in the proof of Lemma~\ref{l.indepofq}). What happens if we replace $\N^A$ by a different monoid?  

To make this question more precise, suppose $M$ and $M'$ are commutative monoids, written additively.  Define an \textbf{action of $M$ on $Q$ metered by $M'$} as a monoid action
	$ \nu: M \times Q \to Q $
together with a map
	\[ \mu: M \times Q \to M' \]
satisfying $\mu(\zero,\qq) = \zero'$ and
	\begin{equation} \label{e.metered} \mu(\xx+\yy, \qq) = \mu (\xx, \nu(\yy, \qq)) + \mu(\yy,\qq) \end{equation} 
for all $\xx,\yy \in M$ and all $\qq \in Q$.
The interpretation is that $\mu(\xx,\qq)$ measures the ``cost'' (or ``byproduct'') of $\xx$ acting on $\qq$, and that costs are additive.

An example of a metered action is the local action $\acts$ of an abelian network (\textsection\ref{s.localaction}): we take $M=M'=\N^A$ with $\mu$ and $\nu$ defined by 
	\[ \xx \acts \qq = \mu(\xx,\qq) . \nu(\xx,\qq). \]
The byproduct of $\xx$ acting on $\qq$ is that some messages are passed, namely $\mu(\xx,\qq)$.

Let us define an \emph{abstract abelian network} on a directed graph $G=(V,E)$ as a collection of $4$-tuples $(M_v,Q_v,\mu_v,\nu_v)$ indexed by $v \in V$, such that $M_v$ is a commutative monoid, $Q_v$ is a set, and $(\mu_v, \nu_v)$ is an action of $M_v$ on $Q_v$ metered by
	\[ \prod_{(v,u) \in E} M_u. \]
As a special case, fix a commutative monoid $C$ and an alphabet $A = \sqcup_{v \in V} A_v$.  An \emph{abelian network with coefficients in $C$} 
is an abstract abelian network with $M_v = C^{A_v}$ for all $v \in V$.  

It would be interesting to compare the computational power of such networks for different monoids $C$.  For example, taking $C=\Z$ we obtain the class of locally recurrent abelian networks (i.e., those satisfying $Q_v = e_vQ_v$ for all $v \in V$).  Taking $C=\R_+$ gives a class of networks with continuous input, which includes the abelian avalanche model of \cite{Gab93} and the divisible sandpile of \cite{LP09}. The latter computes the linear program relaxation of the integer programs that sandpiles compute \cite[Remark 4.9]{part1}.
Other $\R_+$-networks (analogous to the oil-and-water model of \cite{part1})
should compute the linear programs of \cite{Tse90}. What about $C=\Z/p\Z$?

Let us point out that the definition \eqref{e.metered} of a metered action makes sense for arbitrary monoids $M$ and $M'$, which allows us to define networks with coefficients in an arbitrary monoid $C$. Are there interesting examples with $C$ noncommutative?

\silentsec{Acknowledgments}

This research was supported by an NSF postdoctoral fellowship and NSF grants DMS-1105960 and DMS-1243606, and by the UROP and SPUR programs at MIT.


\begin{thebibliography}{9}

\bibitem[AH11a]{AH11a} Omer Angel and Alexander E. Holroyd, Rotor walks on general trees. \emph{SIAM J. Discrete Math.} \textbf{25}(1):423--446, 2011. \arxiv{1009.4802}

\bibitem[AH11b]{AH11b} Omer Angel and Alexander E. Holroyd, Recurrent rotor-router configurations. 2011. \arxiv{1101.2484}

\bibitem[Ash87]{Ash87} Jonathan Ashley, On the Perron-Frobenius eigenvector for nonnegative integral matrices whose largest eigenvalue is integral, \emph{Lin.\ Alg. Appl.} \textbf{94}:103--108, 1987.

\bibitem[BS15]{BS15} Josef Berger and Helmut Schwichtenberg, A bound for Dickson's lemma. \arxiv{1503.03325}

\bibitem[BLS91]{BLS91} Anders Bj\"{o}rner, L\'{a}szl\'{o} Lov\'{a}sz and Peter Shor, Chip-firing games on graphs, {\it European J. Combin.} {\bf 12}(4):283--291, 1991.

\bibitem[BL92]{BL92} Anders Bj\"{o}rner and L\'{a}szl\'{o} Lov\'{a}sz, Chip-firing games on directed graphs, \emph{J. Algebraic Combin.} \textbf{1}(4)305--328, 1992.


\bibitem[BL13]{part1} Benjamin Bond and Lionel Levine, Abelian networks I. Foundations and examples. \arxiv{1309.3445}

\bibitem[BL14]{part3} Benjamin Bond and Lionel Levine, Abelian networks III. The critical group. \arxiv{1409.0170}

\bibitem[Cai15]{Hannah} Hannah Cairns, Some halting problems for abelian sandpiles are undecidable in dimension three. \arxiv{1508.00161}


\bibitem[DdJ98]{DdJ98} Wolfram Decker and Theo de Jong, ``Gr\"{o}bner bases and invariant theory.'' In Buchberger, B., Winkler, F.
eds, \emph{Gr\"{o}bner Bases and Applications}, LNS 251, pp. 61--89, 1998.

\bibitem[DFR05]{DFR05} Emeric Deutsch, Luca Ferrari and Simone Rinaldi, Production matrices, 
\emph{Adv.\ Appl.\ Math.} \textbf{34}(1):101--122, 2005.

\bibitem[Dha99]{Dha99} Deepak Dhar, The abelian sandpile and related models, 
\emph{Physica A} \textbf{263}:4--25, 1999.
\arxiv{cond-mat/9808047}



\bibitem[Dic13]{Dic13} Leonard E. Dickson, Finiteness of the odd perfect and primitive abundant numbers with distinct prime factors, \emph{Amer.\ J. Math.} \textbf{35}(4):413--422, 1913.

\bibitem[Eis95]{Eis95} David Eisenbud, \emph{Commutative Algebra With A View Toward Algebraic Geometry}, Springer, 1995.

\bibitem[FL15]{FL15} Matthew Farrell and Lionel Levine, CoEulerian graphs. \arxiv{1502.04690}

\bibitem[FLP10]{FLP10} Anne Fey, Lionel Levine and Yuval Peres, Growth rates and explosions in sandpiles, \emph{J. Stat.\ Phys.} 138:143--159, 2010. \arxiv{0901.3805}

\bibitem[FMR09]{FMR09} Anne Fey, Ronald Meester, and Frank Redig, Stabilizability and percolation in the infinite volume sandpile model, \emph{Ann.\ Probab.} \textbf{37}(2):654-675, 2009. \arxiv{0710.0939}

\bibitem[FK62]{FK62} M. Fiedler and V. Ptak, On matrices with non-positive off-diagonal elements and positive principal minors, \emph{Czechoslovak Math.\ J.} \textbf{12}:382--400, 1962.

\bibitem[FFSS11]{FFSS11}
Diego Figueira, Santiago Figueira, Sylvain Schmitz, and Philippe Schnoebelen,
Ackermannian and primitive-recursive bounds with Dickson's lemma,
\emph{26th Annual IEEE Symposium on Logic in Computer Science}, 2011.
\arxiv{1007.2989}

\bibitem[FGLP14]{FGLP14} Laura Florescu, Shirshendu Ganguly, Lionel Levine and Yuval Peres, Escape rates for rotor walks in $\Z^d$, \emph{SIAM Journal on Discrete Mathematics} \textbf{28}(1):323--334, 2014. \arxiv{1301.3521}.

\bibitem[FLP14]{FLP14} Laura Florescu, Lionel Levine and Yuval Peres, The range of a rotor walk. \arxiv{1408.5533}

\bibitem[Gab93]{Gab93} Andrei Gabrielov, Abelian avalanches and Tutte polynomials, \emph{Physica A} \textbf{195}:253--274, 1993.

\bibitem[Gab94]{Gab94} Andrei Gabrielov, Asymmetric abelian avalanches and sandpiles.  Preprint, 1994.
\url{http://www.math.purdue.edu/~agabriel/asym.pdf}

\bibitem[Gre51]{Gre51} J. A. Green, On the structure of semigroups, \emph{Ann.\ of Math.} 54:163--172, 1951.

\bibitem[Gri01]{Gri01} Pierre A. Grillet, \emph{Commutative semigroups}, Klower Academic Publishers, 2001.

\bibitem[Gri07]{Gri07} Pierre A. Grillet, Commutative actions, \emph{Acta Sci.\ Math.\ (Szeged)} \textbf{73}:91--112, 2007.

\bibitem[HJ90]{HJ90} Roger A. Horn and Charles R. Johnson, \emph{Matrix Analysis}, Cambridge Univ.\ Press, 1990.

\bibitem[LL09]{LL09} Itamar Landau and Lionel Levine, The rotor-router model on regular trees, \emph{J. Combin.\ Theory A} \textbf{116}: 421--433, 2009. \arxiv{0705.1562}

\bibitem[LP09]{LP09} Lionel Levine and Yuval Peres, Strong spherical asymptotics for rotor-router aggregation and the divisible sandpile, \emph{Potential Anal.} \textbf{30}:1--27, 2009. \arxiv{0704.0688}

\bibitem[MN99]{MN99} Cristopher Moore and Martin Nilsson. The computational complexity of sandpiles. \emph{J. Stat.\ Phys.} \textbf{96}:205--224, 1999.

\bibitem[PS04]{PS04}
Alexander Postnikov and Boris Shapiro, Trees, parking functions, syzygies, and deformations of monomial ideals. \emph{Trans.\ Amer.\ Math.\ Soc.} \textbf{356}(8):3109--3142, 2004. \arxiv{math.CO/0301110}

\bibitem[Sch57]{Sch57} Marcel-Paul Sch\"{u}tzenberger, $\overline{\mathcal{D}}$ repr\'{e}sentation des demi-groupes, \emph{C. R. Acad.\ Sci.\ Paris} 244:1994--96, 1957.

\bibitem[Ste10]{Ste10} Benjamin Steinberg, A theory of transformation monoids: combinatorics and representation theory, \emph{Electr.\ J. Combin.} \textbf{17}:R164, 2010. \arxiv{1004.2982}

\bibitem[Tar88]{Tar88} G\'{a}bor Tardos, Polynomial bound for a chip firing game on graphs, \emph{SIAM J. Disc.\ Math.} \textbf{1}(3):1988.

\bibitem[Tse90]{Tse90} Paul Tseng, Distributed computation for linear programming problems satisfying a certain diagonal dominance condition, \emph{Mathematics of Operations Research} \textbf{15}(1):33--48, 1990.

\end{thebibliography}
\end{document}